\def\dOi{13(2:13)2017}
\providecommand*{\ifempty}[3]{\ifthenelse{\isempty{#1}}{#2}{#3}}
\newcommand{\N}[1][]{\mathbb{N}_{#1}}
\newcommand{\Q}[1][]{\mathbb{Q}_{#1}}
\newcommand{\R}[1][{\geq 0}]{\mathbb{R}_{#1}}
\newcommand{\qee}{\hfill{\scriptsize$\blacksquare$}}
\newcommand{\M}{\mathcal{M}}  
\newcommand{\C}{\mathcal{C}} 
\newcommand{\Dist}{\mathcal{D}}
\newcommand{\norm}[3][]{ \| #2 - #3 \|_{#1} }
\newcommand{\tv}{\norm[\textsc{TV}]}
\newcommand{\lbld}[2]{\ifempty{#1#2}{\mathcal{L}}{\mathcal{L}(#1,#2)}}
\newcommand{\Exp}{\mathsf{exp}}
\newcommand{\rated}[2]{\ifempty{#1#2}{\mathcal{E}}{\mathcal{E}(#1,#2)}}
\newcommand{\trd}[3][d]{\ifempty{#1#2#3}{\mathcal{T}}{\mathcal{T}(#1)(#2,#3)}}
\newcommand{\trcd}[3][d]{\ifempty{#1#2#3}{\Theta}{\Theta(#1)(#2,#3)}}
\newcommand{\set}[3][]{#1\{ #2  \ifempty{#3}{}{ \mid #3} #1\}}
\newcommand{\mDom}{[0,1]^{S \times S}}
\newcommand{\dist}[1][\lambda]{\delta_{#1}}
\newcommand{\discr}[2][\lambda]{\gamma_{#1}^{\mathcal{#2}}}
\newcommand{\ordC}[1][\lambda]{\trianglelefteq_{#1}}
\newcommand{\coupling}[2]{\Omega(#1,#2)}
\theoremstyle{plain}
\begin{document}

\title[On-the-Fly Computation of Bisimilarity Distances]{On-the-Fly Computation of Bisimilarity Distances\rsuper*} 

\author[G.\ Bacci]{Giorgio Bacci}	
\address{Dept.\ of Computer Science, Aalborg University, Denmark}	
\email{\{grbacci, giovbacci, kgl, mardare\}@cs.aau.dk}  
\thanks{Work supported by 
Sapere Aude: DFF-Young
Researchers Grant 10-085054 of the Danish Council for Independent Research, 
by the VKR Center of Excellence MT-LAB and by the Sino-Danish Basic Research Center IDEA4CPS.
}	

\author[G.\ Bacci]{Giovanni Bacci}	
\address{\vspace{-18 pt}}	

\author[K.\ G.\ Larsen]{Kim G. Larsen}	
\address{\vspace{-18 pt}}	

\author[R.\ Mardare]{Radu Mardare}	
\address{\vspace{-18 pt}}	



\keywords{Markov chains, Continuous-time Markov chains, behavioral distances, on-the-fly algorithm, probabilistic systems}
\subjclass{G.3,I.1.4,I.6.4} 
\titlecomment{{\lsuper*}An earlier version of this paper appeared as~\cite{BacciLM:tacas13}. The present version extends~\cite{BacciLM:tacas13} by considering the case of CTMCs and improves the linear programming approach of~\cite{ChenBW12}.}


\begin{abstract}
We propose a distance between continuous-time Markov chains (CTMCs) and study the problem of computing it by comparing three different algorithmic methodologies: iterative, linear program, and on-the-fly.

In a work presented at FoSSaCS'12, Chen et al.\ characterized the bisimilarity distance of Desharnais et al.\ between discrete-time Markov chains as an optimal solution of a linear program that can be solved by using the ellipsoid method.
Inspired by their result, we propose a novel linear program characterization to compute the distance in the continuous-time setting. Differently from previous proposals, ours has a number of constraints that is bounded by a polynomial in the size of the CTMC. This, in particular, proves that the distance we propose can be computed in polynomial time.

Despite its theoretical importance, the proposed linear program characterization turns out to be inefficient in practice. Nevertheless, driven by the encouraging results of our previous work presented at TACAS'13, we propose an efficient on-the-fly algorithm, which, unlike the other mentioned solutions, computes the distances between two given states avoiding an exhaustive exploration of the state space. 
This technique works by successively refining over-approximations of the target distances using a greedy strategy, which ensures that the state space is further explored only when the current approximations are improved.

Tests performed on a consistent set of (pseudo)randomly generated CTMCs show that our algorithm improves, on average, the efficiency of the corresponding iterative and linear program methods with orders of magnitude.
\end{abstract}

\maketitle

\section*{Introduction} \label{sec:intro}

Continuous-time Markov chains (CTMCs) are one of the most prominent models in performance and dependability analysis. They are exploited in a broad range of applications, and constitute the underlying semantics of many modeling formalisms for real-time probabilistic systems such as Markovian queuing networks, stochastic process algebras, and calculi for systems biology.
An example of CTMC is presented in Figure~\ref{fig:ctmcwithepsilon}(left). Here, state $s_1$ goes to state $s_3$ and $s_4$ with probability $\frac{1}{3}$ and $\frac{2}{3}$, respectively. Each state has an associated \emph{exit-rate} representing the rate of an exponentially distributed random variable that characterizes the residence-time in the state%
\footnote{Note that the only residence time distributions that ensure the Markov property (a.k.a., memoryless transition probability) are exponential distributions.}. For example, the probability to move from $s_1$ to any other state within time $t \geq 0$ is given by $\int_0^t 3e^{-3x} dx = 1 - e^{-3t}$. A state with no outgoing transitions (as $s_3$ in Figure~\ref{fig:ctmcwithepsilon}) is called \emph{absorbing}, and represents a terminating state of the system.

\begin{figure}[b]
\centering
\begin{tikzpicture}[
  label1/.style={circle, draw=red!50,fill=red!20, thin, circle split, inner sep=0.6mm},
  label2/.style={circle,draw=blue!50,fill=blue!20, thin, circle split, inner sep=0.6mm},
  label3/.style={circle,draw=green!50!black,fill=green!80!blue,thin}]
  
\draw 
  (7.5,-2) node[label1] (s1) {$s_2$ \nodepart{lower} $3$}
  ($(s1) +(left:5)$) node[label1] (s2) {$s_1$  \nodepart{lower} $3$}
  ($(s1)!0.5!(s2)$) node[label3] (s3) {$s_3$}
  ($(s2)!0.3!(s1)+(down:2)$) node[label2] (s4) {$s_4$  \nodepart{lower} $5$}
  ($(s1)!0.3!(s2)+(down:2)$) node[label2] (s5) {$s_5$  \nodepart{lower} $5$}
  ;

\path[-latex, font=\small]
  (s1) edge[bend right=20]  node[pos=0.3,below] {$\frac{1}{3}$} (s4) %
  (s1) edge node[above] {$\frac{1}{3}$} (s3) %
  (s1) edge[bend left] node[right] {$\frac{1}{3}$} (s5) %
  
  (s2) edge node[above] {$\frac{1}{3}$} (s3) %
  (s2) edge[bend right] node[left] {$\frac{2}{3}$} (s4) %
  
  
  (s4) edge[bend right] node[below] {$1$} (s5) %
  (s5) edge[bend right] node[above] {$1$} (s4)
  ;

\end{tikzpicture}
\qquad\qquad
\begin{tikzpicture}[
  label1/.style={circle, draw=red!50,fill=red!20, thin, circle split, inner sep=0.6mm},
  label2/.style={circle,draw=blue!50,fill=blue!20, thin, circle split, inner sep=0.6mm},
  label3/.style={circle,draw=green!50!black,fill=green!80!blue,thin}]
  
\draw 
  (7.5,-2) node[label1] (s1) {$t_2$ \nodepart{lower} $3$}
  ($(s1) +(left:5)$) node[label1] (s2) {$t_1$  \nodepart{lower} $3$}
  ($(s1)!0.5!(s2)$) node[label3] (s3) {$t_3$}
  ($(s2)!0.3!(s1)+(down:2)$) node[label2] (s4) {$t_4$  \nodepart{lower} $5$}
  ($(s1)!0.3!(s2)+(down:2)$) node[label2] (s5) {$t_5$  \nodepart{lower} $5$}
  ;

\path[-latex, font=\small]
  (s1) edge[bend right=20]  node[pos=0.3,below] {$\frac{1}{3}$} (s4) %
  (s1) edge node[above] {$\frac{1}{3}$} (s3) %
  (s1) edge[bend left] node[right] {$\frac{1}{3}$} (s5) %
  
  (s2) edge node[above] {$\frac{1}{3} + \varepsilon$} (s3) %
  (s2) edge[bend right] node[left] {$\frac{2}{3} - \varepsilon$} (s4) %
  
  
  (s4) edge[bend right] node[below] {$1$} (s5) %
  (s5) edge[bend right] node[above] {$1$} (s4)
  ;

\end{tikzpicture}
\caption{A CTMC (left) and an $\varepsilon$-perturbation of it (right), for some $\varepsilon \in (0, \frac{2}{3})$. Labels are represented by different colors; states are additionally labelled with their exit rates; and transitions with probability $0$ are omitted.}
\label{fig:ctmcwithepsilon}
\end{figure}

A key concept for reasoning about the equivalence of probabilistic systems is Larsen and Skou's  \emph{probabilistic bisimulation} for discrete-time Markov chains (MCs). This notion has been extended to several types of probabilistic systems, including CTMCs. In Figure~\ref{fig:ctmcwithepsilon}(left) $s_4$ and $s_5$ are bisimilar. Moreover, although $s_1$ and $s_2$ move with different probabilities to states $s_4$ and $s_5$, their probabilities to reach any bisimilarity class is the same, so that, also $s_1$ and $s_2$ are bisimilar. 

However, when the numerical values of probabilities are based on statistical sampling or subject to error estimates, any behavioral analysis based on a notion of equivalence is too fragile, as it only relates processes with identical behaviors. 
This issue is illustrated in Figure~\ref{fig:ctmcwithepsilon}(right), where the states $t_1$ and $t_2$ (i.e., the counterparts of $s_1$ and $s_2$, respectively, after a perturbation of the transition probabilities) are not bisimilar. A similar situation occurs considering perturbations on the exit-rates or on associated labels, if one assumes they are taken from a metric space.

This is a common issue in applications such as, systems biology \cite{ThorsleyK10}, planning \cite{ComaniciP11}, games \cite{ChatterjeeAMR10}, or security \cite{CaiG09}, where one is interested in knowing whether two processes that may differ by a small amount in the real-valued parameters (probabilities, rates, etc.) have ``sufficiently'' similar behaviors. This motivated the development of the metric theory for probabilistic systems, initiated by Desharnais et al.~\cite{DesharnaisGJP04} and greatly developed and explored by De Alfaro, van Breugel, Worrell, and others \cite{AlfaroMRS07,BreugelW06,BreugelSW08}. It consists in proposing a \emph{bisimilarity distance (pseudometric)}, which measures the behavioral similarity of two models. 
These pseudometrics, e.g., the one proposed by Desharnais et al., are parametric in a \emph{discount factor} that controls the significance of the future in the measurements.

Since van Breugel et al.\ have presented a fixed point characterization of the aforementioned pseudometric in~\cite{BreugelW:icalp01}, several iterative algorithms have been developed in order to compute its approximation up to any degree of accuracy~\cite{FernsPP04,BreugelW06,BreugelSW08}.
Recently, Chen et al.~\cite{ChenBW12} proved that, for finite MCs with rational transition function, the bisimilarity pseudometrics can be computed exactly in polynomial time. The proof consists in describing the pseudometric as the solution of a linear program that can be solved using the \emph{ellipsoid method}. Although the ellipsoid method is theoretically efficient, \emph{``computational experiments with the method are very discouraging and it is in practice by no means a competitor of the, theoretically inefficient, simplex method''}, as stated in \cite{Schrijver86}. Unfortunately, in this case the simplex method cannot be used to speed up performances in practice, since the linear program to be solved may have an exponential number of constraints in the number of states of the MC.

In this paper, we introduce a bisimilarity pseudometric over CTMCs that extends that of Desharnais et al.\ over MCs, and we consider the problem of computing it both from a theoretical and practical point of view.
We show that the proposed distance can be computed in polynomial time in the size of the CTMC. This is obtained by reducing the problem of computing the distance to that of finding an optimal solution of a linear program that can be solved using the ellipsoid method. Notably, differently from the proposal in~\cite{ChenBW12}, our linear program characterization has a number of constraints that is bounded by a polynomial in the size of the CTMC. This, in particular, allows one to avoid the use of the ellipsoid algorithm in favor of the simplex or the interior-point methods.

However, also in this case, the linear program characterization turns out to be inefficient in practice, even for small CTMCs. Nevertheless, supported by the encouraging results in our previous work~\cite{BacciLM:tacas13}, we propose to follow an on-the-fly approach for computing the distance. 
This is inspired by an alternative characterization of the bisimilarity pseudometric based on the notion of \emph{coupling structure} for a CTMC. Each coupling structure is associated with a \emph{discrepancy function} that represents an over-approximation of the distance. The problem of computing the pseudometric is then reduced to that of searching for an \emph{optimal} coupling structure whose associated discrepancy coincides with the distance.
The exploration of the coupling structures is based on a greedy strategy that, given a coupling structure, moves to a new one by ensuring an actual improvement of the current discrepancy function. This strategy will eventually find an optimal coupling structure. The method is sound independently from the initial starting coupling structure.
Notably, the moving strategy is based on a \emph{local} update of the current coupling structure. Since the update is local, when the goal is to compute the distance only between certain pairs of states, the construction of the coupling structures can be done \emph{on-the-fly}, delimiting the exploration only on those states that are demanded during the computation.

The efficiency of our algorithm has been evaluated on a significant set of randomly generated CTMCs. The results show that our algorithm performs orders of magnitude better than the corresponding iterative and linear program implementations. Moreover, we provide empirical evidence that our algorithm enjoys good execution running times.

One of the main practical advantages of our approach consists in that one can focus on computing only the distances between states that are of particular interest. This is useful in practice, for instance when large systems are considered and visiting the entire state space is computationally expensive. A similar issue has been considered by Comanici et al., in~\cite{ComaniciPP12} in the case of Markov decision processes with rewards, who noticed that for computing the approximated pseudometric one does not need to update the current value for all the pairs at each iteration, but it is sufficient only to focus on the pairs where changes are happening rapidly. In our approach, the termination condition is checked locally, still ensuring that the local optimum corresponds to the global one. 
Our methods can also be used in combination with approximation techniques as, for instance, to provide a least over-approximation of the behavioral distance given over-estimates of some particular distances.

\paragraph{\textit{Synopsis:}} 
The paper is organized as follows. In Section~\ref{sec:CTMC}, we recall the basic preliminaries on continuous-time Markov chains and define the bisimilarity pseudometric. Section~\ref{sec:lp} is devoted to the analysis of the complexity of the problem of computing such a distance. Here, two approaches are considered: an approximate iterative method and a linear program characterization. In Section~\ref{sec:couplchar}, we provide an alternative characterization of the distance based on the notion of coupling structure. This is the basis for the development of an on-the-fly algorithm (Section~\ref{sec:onthefly}) for the computation of the pseudometric, whose correctness and termination is proven in Section~\ref{sec:algorithm}. The efficiency of this algorithm is supported by experimental results, shown in Section~\ref{sec:experiments}. Final remarks and conclusions are in Section~\ref{sec:conclusions}.

\section{Continuous-time Markov Chains and Bisimilarity Pseudometrics} \label{sec:CTMC}

We recall the definitions of (finite) $L$-labelled continuous-time Markov chains (CTMCs) for a nonempty set of labels $L$, and \emph{stochastic bisimilarity} over them. Then, we introduce a \emph{behavioral pseudometric} over CTMCs to be considered as a quantitative generalization of stochastic bisimilarity.

Given a finite set $X$, a discrete probability distribution over it is a function $\mu \colon X \to [0,1]$ such that $\mu(X) =1$, where $\mu(E) = \sum_{x \in E} \mu(x)$, for $E \subseteq X$. We denote the set of finitely supported discrete probability distributions over $X$ by $\Dist(X)$.

\begin{defi}[{Continuous-time Markov chain}]
\label{def:CTMC}
An $L$-labelled \emph{continuous-time Markov chain} is a tuple $\M = (S, A, \tau, \rho, \ell)$ consisting of a nonempty finite set $S$ of \emph{states}, a set $A \subseteq S$ of \emph{absorbing states}, 
a \emph{transition probability function} $\tau \colon S \setminus A \to \Dist(S)$, an \emph{exit rate function} $\rho \colon S \setminus A \to \R[>0]$, and a \emph{labeling function} $\ell \colon S \to L$.
\qee
\end{defi}

The labels in $L$ represent properties of interest that hold in a particular state according to the labeling function $\ell \colon S \to L$.  If $s \in S$ is the current state of the system and $E \subseteq S$ is a subset of states, $\tau(s)(E) \in [0,1]$ corresponds to the probability that a transition from $s$ to arbitrary $s' \in E$ is taken, and $\rho(s) \in \R[>0]$ represents the rate of an exponentially distributed random variable that characterizes the residence time in the state $s$ before any transition is taken. Therefore, the probability to make a transition from state $s$ to any $s' \in E$ within time unit $t \in \R$ is given by $\tau(s)(E) \cdot \Exp[\rho(s)]([0,t))$, where $\Exp[r](B) = \int_B r e^{-rx} \;dx$, for any Borel subset $B \subseteq \R$ and $r > 0$.
Absorbing states in $A \subseteq S$ are used to represent termination or deadlock states. An example of a CTMC is shown in Figure~\ref{fig:ctmcwithepsilon}.

For discrete-time Markov chains, the standard notion of behavioral equivalence is probabilistic bisimulation of Larsen and Skou~\cite{LarsenS91}. The following definition extends it to CTMCs. To ease the notation, for $E \subseteq S$, we introduce the relation ${\equiv_E} \subseteq S \times S$ defined by $s \equiv_E s'$ if either $s,s' \in E$ or $s,s' \notin E$.
\begin{defi}[Stochastic Bisimulation] \label{def:bisim}
Let $\M = (S, A, \tau, \rho, \ell)$ be a CTMC. An equivalence relation $R \subseteq S \times S$ is a \emph{stochastic bisimulation} on $\M$ if whenever $s \mathrel{R} t$, then
\begin{enumerate}[label=(\roman*)]
  \item $s \equiv_A t$, $\ell(s) = \ell(t)$, and
  \item if $s,t \not\in A$, then $\rho(s) = \rho(t)$ and, for all $C \in S/_R$, $\tau(s)(C) = \tau(t)(C)$.
\end{enumerate}
Two states $s, t \in S$ are \emph{bisimilar} with respect to $\M$, written $s \sim_\M t$, if they are related by some probabilistic bisimulation on $\M$.
\end{defi}
Intuitively, two states are bisimilar if they have the same labels, they agree on being absorbing or not, and, in the case they are non-absorbing, their residence-time distributions and probability of moving 
by a single transition to any given class of bisimilar states is always the same. 
As an example of two stochastic bisimilar states, consider $s_1$ and $s_2$ in the CTMC depicted on the left hand side of Figure~\ref{fig:ctmcwithepsilon}. A bisimulation relation that relates them is the equivalence relation with equivalence classes given by $\set{s_1,s_2}{}$, $\set{s_3}{}$, and $\set{s_4, s_5}{}$.

\subsection{Bisimilarity Pseudometrics on CTMCs} 
In this section, we introduce a family of pseudometrics on CTMCs parametric in a discount factor $\lambda \in (0,1)$. Following the approach of~\cite{BreugelHMW07}, given a CTMC $\M = (S, A, \tau, \rho, \ell)$ we define a ($1$-bounded) pseudometric on $S$ as the least fixed point of an operator on the set $\mDom$ of functions from $S \times S$ to $[0,1]$. This pseudometric is then shown to be adequate with respect to stochastic bisimilarity: we prove that two states are stochastic bisimilar if and only if they have distance zero.

Recall that $d \colon X \times X \to \R$ is a \emph{pseudometric} on a set $X$ if $d(x,x) = 0$ (reflexivity), $d(x,y) = d(y,x)$ (symmetry)  and $d(x,y) + d(y,z) \geq d(x,z)$ (triangular inequality), for arbitrary $x,y,z \in X$; it is a \emph{metric} if, in addition, $d(x,y) = 0$ iff $x = y$. A pair $(X,d)$ where $d$ is a (pseudo)metric on $X$ is called a \emph{(pseudo)metric space}.

\noindent Hereafter we will assume $L$ to be equipped with a $1$-bounded metric%
\footnote{Since the set $S$ of states is assumed to be finite, one may assume the set of labels to be so as well. Thus, the metric $d_L$ on labels can be bounded without loss of generality.} $d_L \colon L \times L \to [0,1]$.

The operator we are going to introduce will use three key ingredients: the distance $d_L$ between labels, a distance between residence-time distributions, and a distance between transition distributions. The first is meant to measure the \emph{static} differences with respect to the labels associated with the states, the last two are meant to capture the differences in the \emph{dynamics}, respectively, with respect to the continuous and discrete probabilistic choices. 

To this end, we consider two distances over probability distributions. The first one is the \emph{total variation metric}, defined for arbitrary Borel probability measures $\mu, \nu$ over $\R$ as  
\begin{equation*}
  \textstyle
  \tv{\mu}{\nu} = \sup_{E} |\mu(E) - \nu(E)| \,,
\end{equation*}
where the supremum is taken over the Borel measurable sets of $\R$.
The second one is the \emph{Kantorovich distance}, which is based on the notion of \emph{coupling} of probability measures, which we introduce next in the case of probability distributions over finite sets.
\begin{defi}[Coupling]
Let $S$ be a finite set, and let $\mu, \nu \in \Dist(S)$. A probability distribution $\omega \in \Dist(S \times S)$ is
said a \emph{coupling} for $(\mu, \nu)$ if, for arbitrary $u,v \in S$ 
\begin{align*}
  \textstyle \sum_{v \in S} \omega(u, v) = \mu(u) 
  && \text{and} && 
  \textstyle \sum_{u \in S} \omega(u, v) = \nu(v) \,.
\end{align*}
In other words, $\omega$ is a joint probability distribution with left and right marginal, respectively, given by $\mu$ and $\nu$. We denote the set of couplings for $(\mu, \nu)$ by $\coupling{\mu}{\nu}$.
\end{defi}

For a finite set $S$ and a $1$-bounded distance $d \colon S \times S \to [0,1]$ over it, the \emph{Kantorovich distance} is defined, for arbitrary distributions $\mu, \nu \in \Dist(S)$ as follows%
\footnote{The minimum can be used in place of an infimum thanks to the Fenchel-Rockafeller duality theorem (see~\cite[Theorems 1.3 and 1.9]{Villani03}).}
\begin{equation*}
  \textstyle
  \mathcal{K}_d(\mu, \nu) = \min \set{ \sum_{u,v \in S} d(u,v) \cdot \omega(u,v) }{\omega \in \coupling{\mu}{\nu}} \,.
\end{equation*}
Intuitively, $\mathcal{K}_d$ lifts a ($1$-bounded) distance over $S$ to a ($1$-bounded) distance over its probability distributions. One can show that $\mathcal{K}_d$ is a (pseudo)metric if $d$ is a (pseudo)metric.

Now, consider the following functional operator.
\begin{defi} \label{def:deltaoperator}
Let $\M = (S, A, \tau, \rho, \ell)$ be CTMC and $\lambda \in (0,1)$ a \emph{discount factor}.
The function $\Delta^\M_\lambda \colon \mDom \to \mDom$ is defined as follows, for $d \colon S \times S \to [0,1]$ and $s, t \in S$
\begin{equation*}
  \Delta^\M_\lambda(d)(s,t) =
  \begin{cases}
    1 & \text{if $s \not\equiv_A t$} \\
    \lbld{s}{t} & \text{if $s,t \in A$} \\
    \max \set{\lbld{s}{t}, \lambda \cdot \trd{s}{t} }{} & \text{if $s,t \notin A$} \\
  \end{cases}
\end{equation*}
where $\trd[]{}{} \colon \mDom \to \mDom$ and $\lbld{}{}, \rated{}{} \colon S \times S \to [0,1]$ are respectively defined by
\begin{gather*}
  \trd{s}{t} = \rated{s}{t}  + (1 - \rated{s}{t}) \cdot \mathcal{K}_d(\tau(s),\tau(t)) \,,  \\
  \lbld{s}{t} = d_L(\ell(s),\ell(t)) \,,
  \qquad \text{and} \qquad
  \rated{s}{t} = \tv{\Exp[\rho(s)]}{\Exp[\rho(t)]}  \,.
\end{gather*}
\end{defi}
The functional $\Delta^\M_\lambda$ measures the difference of two states with respect to: their labels (by means of the pseudometric $\lbld{}{}$), their residence-time distributions (by means of the pseudometric $\rated{}{}$), and their discrete probabilities to move to the next state (by means of the Kantorovich distance). If two states disagree on being absorbing (or not) they are considered incomparable, and their distance is set to $1$. If both states are absorbing, they express no dynamic behavior, hence they are compared statically, and their distance corresponds to that occurring between their respective labels. Finally, if the states are non-absorbing, then they are compared with respect to both their static and dynamic features, namely, taking the maximum among their respective associated distances.

Specifically, the value $\rated{s}{t}$ corresponds to the least probability that two transitions are taken independently from the states $s$ and $t$ at different moments in time. 
This value is used by the functional $\trd[]{}{}$ to measure the overall differences that might occur in the dynamics of the two states in combination with the Kantorovich distance between their transition probability distributions (see Remark~\ref{rmk:altenativedefoftimedistance} for more details).

The set $\mDom$ is endowed with the partial order $\sqsubseteq$ defined as $d \sqsubseteq d'$ iff $d(s,t) \leq d'(s,t)$ for all $s,t \in S$ and it forms a complete lattice. The bottom element ${\bf 0}$ is the constant $0$ function, while the top element is the constant $1$ function. For any subset $D \subseteq \mDom$, the least upper bound $\bigsqcup D$, and greatest lower bound $\bigsqcap D$ are, respectively, given by $(\bigsqcup D)(s,t) = \sup_{d \in D} d(s,t)$ and $(\bigsqcap D)(s,t) = \inf_{d \in D} d(s,t)$, for all $s,t \in S$.

It is easy to check that, for any $\M$ and $\lambda \in (0,1)$, $\Delta^\M_\lambda$ is monotone (i.e., whenever $d \sqsubseteq d'$, then $\Delta^\M_\lambda(d) \sqsubseteq \Delta^\M_\lambda(d')$), thus, since $(\mDom, \sqsubseteq)$ is a complete lattice, by Tarski's fixed point theorem $\Delta^\M_\lambda$ admits least and greatest fixed points.

\begin{defi}[Bisimilarity distance] \label{def:Delta-fix}
Let $\M$ be a CTMC and $\lambda \in (0,1)$.
The $\lambda$-disc\-oun\-ted \emph{bisimilarity pseudometric} on $\mathcal{M}$,  
denoted by $\dist^\M$, is the least fixed point of $\Delta^{\mathcal{M}}_{\lambda}$.
\end{defi}

The rest of the section is devoted to show that the least fixed point $\dist^\M$ is indeed a pseudometric and, moreover, is adequate with respect to stochastic bisimilarity (Theorem~\ref{th:bisimpseudo}). This justifies the definition above.
To this end we need some technical lemmas. In particular, we prove that $\Delta^\M_\lambda$ preserves pseudometrics (Lemma~\ref{lem:preservepseudometric}) and it is Lipschitz continuous (Lemma~\ref{lem:nonexpansive}). 

Hereafter, unless mentioned otherwise, we fix a CTMC $\M = (S, A, \tau, \rho, \ell)$ and a discount factor $\lambda \in (0,1)$. To ease the notation $\Delta^\M_{\lambda}$, $\dist^\M$, and $\sim_\M$ will be denoted simply by $\Delta_{\lambda}$, $\dist$, and $\sim$, respectively, whenever $\M$ is clear from the context.

\begin{lem} \label{lem:preservepseudometric}
The operator $\Delta_\lambda$ preserves pseudometrics.
\end{lem}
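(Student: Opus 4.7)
The plan is to verify that $\Delta_\lambda(d)$ satisfies the three pseudometric axioms (reflexivity, symmetry, and triangle inequality), under the hypothesis that $d \in \mDom$ is a pseudometric. Since $\Delta_\lambda(d)$ takes values in $[0,1]$ by construction, no further boundedness argument is required.

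Reflexivity and symmetry are handled by inspection of the three cases in the definition. For $s = t$, either $s \in A$, so $\Delta_\lambda(d)(s,s) = \lbld{s}{s} = d_L(\ell(s),\ell(s)) = 0$, or $s \notin A$, in which case $\rated{s}{s} = 0$ (TV distance of an exponential with itself) and $\mathcal{K}_d(\tau(s),\tau(s)) = 0$ (witnessed by the diagonal coupling, using $d(x,x) = 0$); hence $\trd{s}{s} = 0$ and $\Delta_\lambda(d)(s,s) = 0$. Symmetry is immediate from symmetry of $d_L$, of $\tv{\cdot}{\cdot}$, of $\mathcal{K}_d$ (inherited from $d$), and of $\equiv_A$.

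For the triangle inequality $\Delta_\lambda(d)(s,u) \leq \Delta_\lambda(d)(s,t) + \Delta_\lambda(d)(t,u)$, I would proceed by case analysis on the absorbing status of $s,t,u$. If $s \not\equiv_A u$, then at least one of $s \not\equiv_A t$ or $t \not\equiv_A u$ must hold, so the right-hand side is $\geq 1$. If $s,u \in A$, the subcase $t \in A$ reduces to the triangle inequality of $d_L$, while $t \notin A$ forces the right-hand side to $2$; symmetric reasoning handles $s,u \notin A$ with $t \in A$. The only genuinely new case is $s,t,u \notin A$, where the $\mathcal{L}$-component on the outer $\max$ is controlled by the triangle inequality of $d_L$, and it remains to prove that $\mathcal{T}(d)$ itself satisfies the triangle inequality.

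The main obstacle is precisely this last step. The key is the rewriting
\[
  \trd{s}{t} = 1 - (1 - \rated{s}{t})\bigl(1 - \mathcal{K}_d(\tau(s),\tau(t))\bigr),
\]
so setting $a_1 = \rated{s}{t}$, $a_2 = \rated{t}{u}$, $a_3 = \rated{s}{u}$, and defining $b_1, b_2, b_3$ analogously for $\mathcal{K}_d$, one has $a_3 \leq a_1 + a_2$ from the triangle inequality of the TV metric and $b_3 \leq b_1 + b_2$ since the Kantorovich lifting of a pseudometric is a pseudometric (a standard result, \emph{cf.}\ \cite{Villani03}). Writing $g(a,b) = a + b - ab$, which is coordinatewise non-decreasing on $[0,1]^2$, a direct expansion gives
\[
  g(a_1,b_1) + g(a_2,b_2) - g(a_1+a_2, b_1+b_2) = a_1 b_2 + a_2 b_1 \geq 0,
\]
which settles the regime $a_1+a_2, b_1+b_2 \leq 1$. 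If instead $a_1 + a_2 > 1$ (respectively $b_1+b_2 > 1$), the bound $g(a,b) \geq a$ (respectively $g(a,b) \geq b$) yields $\trd{s}{t} + \trd{t}{u} > 1 \geq \trd{s}{u}$. Multiplying by $\lambda$ and combining with the $\mathcal{L}$-component through the outer $\max$ in the definition of $\Delta_\lambda(d)$ delivers the desired inequality.
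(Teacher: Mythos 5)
Your proof is correct and follows essentially the same route as the paper's: both reduce the statement to the triangle inequality for $\mathcal{T}(d)$, invoke the triangle inequalities of the total-variation and Kantorovich components, and split on whether the relevant sums exceed $1$. Your packaging via the probabilistic sum $g(a,b)=a+b-ab$ and the identity $g(a_1,b_1)+g(a_2,b_2)-g(a_1+a_2,b_1+b_2)=a_1b_2+a_2b_1$ is merely a more symmetric rendering of the paper's chain of inequalities, and your explicit case analysis on absorbing states spells out what the paper subsumes under the remark that pointwise maxima of pseudometrics are pseudometrics.
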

\begin{proof}
Let $d \colon S \times S \to [0,1]$ be a pseudometric. We want to prove that $\Delta_\lambda(d)$ is a pseudometric. Recall that $\lbld{}{}, \rated{}{} \colon S \times S \to [0,1]$ are pseudometrics. Thus, since the point wise maximum of pseudometrics is a pseudometric, it suffices to prove that $\trd[]{}{}$ preserves pseudometrics. Recall that, $\mathcal{K}_d \colon \Dist(S) \times \Dist(S) \to [0,1]$ is a pseudometric, since $d$ is so. Thus, reflexivity and symmetry are immediate. The only nontrivial case is the triangular inequality. Let $s,t,u \in S \setminus A$, we want to prove $\trd{s}{t} \leq \trd{s}{u} + \trd{u}{t}$. First note that, for $0 \leq \beta \leq 1$ and $\alpha' \geq \alpha$, the following holds:
\begin{align*}
  \alpha + (1- \alpha) \beta
  &
  = \beta + (1 - \beta) \alpha \tag{distributivity} \\
  &\leq \beta + (1 - \beta) \alpha' \tag{$\alpha \leq \alpha'$ and $0 \leq \beta \leq 1$} \\
  &= \alpha' + (1 - \alpha') \beta  \,. \tag{distributivity}
\end{align*}
Thus, since $\rated{}{}$ is a pseudometric, by triangular inequality and the above we have 
\begin{align*}
  \trd{s}{t} 
  &= \rated{s}{t}  + (1 - \rated{s}{t}) \cdot \mathcal{K}_d(\tau(s),\tau(t))  \tag{def. $\trd[]{}{}$} \\
  &\leq \rated{s}{u} + \rated{u}{t} + \big(1 - (\rated{s}{u} + \rated{u}{t}) \big) \cdot \mathcal{K}_d(\tau(s),\tau(t))
  \label{eq:ineq} \tag{$*$}
\intertext{%
If we show that the last summand in \eqref{eq:ineq} is less than or equal to the sum of $(1 - \rated{s}{u}) \cdot \mathcal{K}_d(\tau(s),\tau(u))$ and $(1 - \rated{u}{t}) \cdot \mathcal{K}_d(\tau(u),\tau(t))$,
we get the following, and we are done
}
&\leq \rated{s}{u}  + (1 - \rated{s}{u}) \cdot \mathcal{K}_d(\tau(s),\tau(u)) +
   \rated{u}{t}  + (1 - \rated{u}{t}) \cdot \mathcal{K}_d(\tau(u),\tau(t)) \\
&= \trd{s}{u} +   \trd{u}{t}  \tag{def. $\trd[]{}{}$} \,.
\end{align*}
To this end, consider two cases. If $\rated{s}{u} + \rated{u}{t} > 1$ then the inequality holds trivially, since $1 - (\rated{s}{u} + \rated{u}{t}) < 0$, so that the last summand in \eqref{eq:ineq} is negative and $1-\rated{s}{u} \geq 0$. 
Instead, if $\rated{s}{u} + \rated{u}{t} \leq 1$ then $1 - (\rated{s}{u} + \rated{u}{t}) \geq 0$, so we have
\begin{align*}
  &\big(1 - (\rated{s}{u} + \rated{u}{t}) \big) \cdot \mathcal{K}_d(\tau(s),\tau(t)) \\
  &\quad\leq \big(1 - (\rated{s}{u} + \rated{u}{t}) \big) \cdot 
    \big(\mathcal{K}_d(\tau(s),\tau(u)) + \mathcal{K}_d(\tau(u),\tau(t)) \big) 
  \tag{triang. $\mathcal{K}_d$} \\
  &\quad= 
  \big(1 - (\rated{s}{u} + \rated{u}{t}) \big) \cdot \mathcal{K}_d(\tau(s),\tau(u)) +
  \big(1 - (\rated{s}{u} + \rated{u}{t}) \big) \cdot  \mathcal{K}_d(\tau(u),\tau(t)) \\
  &\quad\leq 
  \big(1 - \rated{s}{u} \big) \cdot \mathcal{K}_d(\tau(s),\tau(u)) +
  \big(1 - \rated{u}{t} \big) \cdot  \mathcal{K}_d(\tau(u),\tau(t))
\end{align*}
and we are done.
\end{proof}

The set $\mDom$ can be turned into a metric space by means of the supremum norm $\norm{d}{d'} = \sup_{s,t \in S} |d(s,t) - d'(s,t)|$. Next we show that the $\lambda$-discounted functional operator $\Delta_\lambda$ is $\lambda$-Lipschitz continuous, that is $\norm{\Delta_\lambda(d')}{\Delta_\lambda(d)} \leq \lambda \cdot \norm{d'}{d}$, for any $d,d' \in \mDom$.
\begin{lem} \label{lem:nonexpansive}
The operator $\Delta_\lambda$ is $\lambda$-Lipschitz continuous.
\end{lem}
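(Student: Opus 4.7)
The plan is to exploit the case structure of $\Delta_\lambda$. For any $s,t \in S$, in the cases $s \not\equiv_A t$ and $s,t \in A$, the value $\Delta_\lambda(d)(s,t)$ does not depend on $d$ at all, so $\Delta_\lambda(d)(s,t) = \Delta_\lambda(d')(s,t)$ and these pairs contribute nothing to $\norm{\Delta_\lambda(d)}{\Delta_\lambda(d')}$. Hence I only have to bound the pointwise difference on pairs with $s,t \notin A$.

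For such a pair, I will use the elementary fact $|\max\{a,b\} - \max\{a,c\}| \leq |b-c|$ with $a = \lbld{s}{t}$ to reduce the bound to $\lambda \cdot |\trd{s}{t} - \trd[d']{s}{t}|$, so that the static term $\lbld{s}{t}$ drops out cleanly. Unfolding the definition of $\trd[]{}{}$, the dependence on $d$ sits only in the Kantorovich factor, and I obtain
\begin{equation*}
  |\trd{s}{t} - \trd[d']{s}{t}| = (1 - \rated{s}{t}) \cdot |\mathcal{K}_d(\tau(s),\tau(t)) - \mathcal{K}_{d'}(\tau(s),\tau(t))| \,.
\end{equation*}
Since $\rated{s}{t} \in [0,1]$, the factor $(1 - \rated{s}{t})$ is at most $1$ and can be discarded.

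The main technical step is then to show that the Kantorovich lifting is non-expansive in the sup-norm on the underlying distance, i.e., $|\mathcal{K}_d(\mu,\nu) - \mathcal{K}_{d'}(\mu,\nu)| \leq \norm{d}{d'}$ for all $\mu,\nu \in \Dist(S)$. I prove this by picking an optimal coupling $\omega^* \in \coupling{\mu}{\nu}$ for the distance $d$; since $\omega^*$ is also feasible for the $d'$-problem,
\begin{equation*}
  \mathcal{K}_{d'}(\mu,\nu) - \mathcal{K}_d(\mu,\nu) \leq \sum_{u,v \in S} (d'(u,v) - d(u,v)) \cdot \omega^*(u,v) \leq \norm{d}{d'} \cdot \sum_{u,v \in S} \omega^*(u,v) = \norm{d}{d'} \,,
\end{equation*}
and swapping the roles of $d$ and $d'$ yields the absolute-value bound.

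Putting the pieces together, I conclude that for every $s,t \in S$,
\begin{equation*}
  |\Delta_\lambda(d)(s,t) - \Delta_\lambda(d')(s,t)| \leq \lambda \cdot (1 - \rated{s}{t}) \cdot \norm{d}{d'} \leq \lambda \cdot \norm{d}{d'} \,,
\end{equation*}
and taking the supremum over $(s,t)$ gives $\norm{\Delta_\lambda(d)}{\Delta_\lambda(d')} \leq \lambda \cdot \norm{d}{d'}$. I do not anticipate a serious obstacle: the main care is to recognize that the max-difference inequality absorbs the static term and that the convex-combination coefficient $1 - \rated{s}{t}$ is harmless, so that the discount $\lambda$ is the only factor appearing in front of $\norm{d}{d'}$.
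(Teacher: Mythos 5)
Your proof is correct, and the decisive step is the same one the paper uses: the optimal coupling $\omega^*$ for $\mathcal{K}_d(\tau(s),\tau(t))$ remains feasible for the $d'$-problem, which gives $\mathcal{K}_{d'}-\mathcal{K}_d\leq\sum_{u,v}(d'(u,v)-d(u,v))\,\omega^*(u,v)\leq\norm{d'}{d}$. Where you differ is in the scaffolding. The paper first invokes a cited result (Corollary~1 of~\cite{Breugel12}) to reduce the claim to showing $\Delta_\lambda(d')(s,t)-\Delta_\lambda(d)(s,t)\leq\lambda\norm{d'}{d}$ only for comparable pairs $d\sqsubseteq d'$, and then performs a case analysis on which branch of the $\max$ is active in $\Delta_\lambda(d)(s,t)$ versus $\Delta_\lambda(d')(s,t)$. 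You instead work with absolute values throughout and dispose of the $\max$ via the elementary inequality $|\max\{a,b\}-\max\{a,c\}|\leq|b-c|$, proving the symmetric non-expansiveness $|\mathcal{K}_d(\mu,\nu)-\mathcal{K}_{d'}(\mu,\nu)|\leq\norm{d}{d'}$ by running the coupling argument in both directions. This buys you a self-contained proof with no external citation and no case split, at the modest cost of proving both directions of the Kantorovich bound rather than one; you also correctly track the factor $(1-\rated{s}{t})\in[0,1]$, which both arguments discard harmlessly. No gap.
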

\proof
By~\cite[Corollary 1]{Breugel12}, to prove that $\Delta_\lambda$ is $\lambda$-Lipschitz continuous it suffices to show that, whenever $d \sqsubseteq d'$ then, for all $s,t \in S$, $\Delta_\lambda(d')(s, t) - \Delta_\lambda(d)(s, t) \leq \lambda \cdot \norm{d'}{d}$.
If $s \not\equiv_A t$, then $\Delta_\lambda(d')(s, t) - \Delta_\lambda(d)(s, t) = 1 - 1 = 0$, so that, the inequality is satisfied. If $s,t \notin A$, $\Delta_\lambda(d')(s, t) - \Delta_\lambda(d)(s, t) = \lbld{s}{t} - \lbld{s}{t} = 0$, and again the inequality holds trivially. The same happens when $\lbld{s}{t} \geq \lambda \cdot \trd[d']{s}{t}$. Indeed, by monotonicity of $\trd[]{}{}$, $\lbld{s}{t} \geq \lambda \cdot \trd{s}{t}$ holds, so that, by definition of $\Delta_\lambda$ we have
\begin{equation*}
\Delta_\lambda(d')(s,t) - \Delta_\lambda(d)(s,t)
  = \lambda \cdot \big(  \lbld{s}{t} - \lbld{s}{t} \big) = 0 \leq \lambda \cdot \norm{d'}{d} \,. 
\end{equation*}
It remains the case when $s,t \notin A$ and $\lambda \cdot \trd{s}{t} \leq \lbld{s}{t} < \lambda \cdot \trd[d']{s}{t}$. 
Assume that $\mathcal{K}_d(\tau(s),\tau(t)) = \sum_{u,v \in S} d(u,v) \cdot \omega(u,v)$, for some $\omega \in \coupling{\tau(s)}{\tau(t)}$, then we have
\begin{align*}
  \Delta_\lambda(d')(s,t) - \Delta_\lambda(d)(s,t)
  &\leq \lambda \cdot \big( \trd[d']{s}{t} - \trd[d]{s}{t} \big) \\
  &\leq \lambda \cdot \big( \mathcal{K}_{d'}(\tau(s),\tau(t)) - \mathcal{K}_d(\tau(s),\tau(t)) \big) \\
  &\leq \textstyle \lambda \cdot \big(
  \sum_{u,v \in S} d'(u,v) \cdot \omega(u,v) - \sum_{u,v \in S} d(u,v) \cdot \omega(u,v) \big) \\
  &= \textstyle \lambda \cdot \big(
  \sum_{u,v \in S} (d'(u,v) - d(u,v)) \cdot \omega(u,v) \big) \\
  &\leq \textstyle \lambda \cdot \big(
  \sum_{u,v \in S} \norm{d'}{d} \cdot \omega(u,v) \big) \\
  &= \textstyle \lambda \cdot \big(
   \norm{d'}{d} \cdot \sum_{u,v \in S} \omega(u,v) \big) \\
  &= \lambda \cdot \norm{d'}{d} \,.\rlap{\hbox to 221 pt{\hfill\qEd}}
\end{align*}

It is standard that $\mDom$ with the supremum norm forms a complete metric space (i.e, every Cauchy sequence converges). Therefore, since $\lambda \in (0,1)$, a direct consequence of Lemma~\ref{lem:nonexpansive} and Banach's fixed point theorem is the following.
\begin{thm} \label{th:uniquefix}
For any $\lambda \in (0,1)$, $\dist$ is the unique fixed point of $\Delta_\lambda$. Moreover, for any $n \in \N$, and $d \colon S \times S \to [0,1]$ we have $\norm{\dist}{\Delta^n_\lambda(d)} \leq \frac{\lambda^n}{1-\lambda} \norm{\Delta_\lambda(d)}{d}$.
\end{thm}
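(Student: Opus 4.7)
The plan is to invoke Banach's fixed point theorem in the complete metric space $(\mDom, \|\cdot\|)$, since by Lemma~\ref{lem:nonexpansive} the operator $\Delta_\lambda$ is $\lambda$-Lipschitz continuous with $\lambda \in (0,1)$, hence a contraction. Banach's theorem then yields existence and \emph{uniqueness} of a fixed point of $\Delta_\lambda$. Because $\dist$ was defined in Definition~\ref{def:Delta-fix} as the \emph{least} fixed point (which exists by Tarski, as already noted), uniqueness forces it to be the fixed point, giving the first assertion.

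For the quantitative a priori bound, I would iterate the contraction property. First I would show, by induction on $k$, that for every $d \in \mDom$ and every $k \in \N$,
\begin{equation*}
  \norm{\Delta_\lambda^{k+1}(d)}{\Delta_\lambda^{k}(d)} \leq \lambda^{k} \cdot \norm{\Delta_\lambda(d)}{d} \,,
\end{equation*}
which is immediate from Lemma~\ref{lem:nonexpansive} applied $k$ times. Then, for any $m > n$, the triangle inequality yields
\begin{equation*}
  \norm{\Delta_\lambda^{m}(d)}{\Delta_\lambda^{n}(d)}
  \leq \sum_{k=n}^{m-1} \norm{\Delta_\lambda^{k+1}(d)}{\Delta_\lambda^{k}(d)}
  \leq \Big( \sum_{k=n}^{m-1} \lambda^{k} \Big) \cdot \norm{\Delta_\lambda(d)}{d} \,.
\end{equation*}

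To conclude, I would let $m \to \infty$. Since the sequence $(\Delta_\lambda^m(d))_{m}$ is Cauchy (the tail of the geometric series is arbitrarily small) and $(\mDom, \|\cdot\|)$ is complete, it converges to some $d^* \in \mDom$; by continuity of $\Delta_\lambda$ (a Lipschitz map is continuous), $d^*$ is a fixed point, and by uniqueness $d^* = \dist$. Passing to the limit in the displayed inequality and using $\sum_{k=n}^{\infty} \lambda^{k} = \frac{\lambda^n}{1-\lambda}$ gives
\begin{equation*}
  \norm{\dist}{\Delta_\lambda^{n}(d)} \leq \frac{\lambda^n}{1-\lambda} \cdot \norm{\Delta_\lambda(d)}{d} \,,
\end{equation*}
as required. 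No step here is really an obstacle, since everything reduces to a textbook Banach fixed point argument; the only mild care needed is to note that the least fixed point of Tarski and the unique fixed point of Banach must coincide, so that the statement really speaks about $\dist$ as defined in Definition~\ref{def:Delta-fix}.
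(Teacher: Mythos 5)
Your proposal is correct and takes essentially the same route as the paper: the paper simply observes that $(\mDom,\|\cdot\|)$ is complete and cites Lemma~\ref{lem:nonexpansive} together with Banach's fixed point theorem, of which your argument (Cauchy iterates, passage to the limit, and the standard a priori estimate $\sum_{k\geq n}\lambda^k=\frac{\lambda^n}{1-\lambda}$) is just the textbook unfolding. Your remark that the Tarski least fixed point and the Banach unique fixed point must coincide is exactly the point implicitly used in the paper.
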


Now we are ready to state the main theorem of this section.
\begin{thm}[Bisimilarity pseudometric] \label{th:bisimpseudo}
$\dist$ is a pseudometric. Moreover, for any $s,t \in S$, $s \sim t$ if and only if $\dist(s,t) = 0$
\end{thm}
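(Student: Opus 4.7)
The plan is to prove the pseudometric claim by showing it transfers from the iterates, and to prove the characterization of kernel via $\sim$ by two containments, each expressed as a fixed-point argument on $\Delta_\lambda$.

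For the pseudometric part, I would start from the constant-zero function $\mathbf{0} \in \mDom$, which is trivially a pseudometric. By Lemma~\ref{lem:preservepseudometric}, every iterate $\Delta_\lambda^n(\mathbf{0})$ is a pseudometric. By Theorem~\ref{th:uniquefix} the sequence $\{\Delta_\lambda^n(\mathbf{0})\}_n$ converges to $\dist$ in the supremum norm, hence also pointwise. Since reflexivity, symmetry and the triangle inequality are closed conditions under pointwise limits, $\dist$ inherits them and is a pseudometric.

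For the ($\Rightarrow$) direction of the kernel characterization, I would consider the auxiliary function $d_\sim \in \mDom$ defined by $d_\sim(s,t) = 0$ if $s \sim t$ and $d_\sim(s,t) = 1$ otherwise, and show that $\Delta_\lambda(d_\sim) \sqsubseteq d_\sim$. Since $\dist$ is the least (pre)fixed point of $\Delta_\lambda$ (by Tarski together with Theorem~\ref{th:uniquefix}), this yields $\dist(s,t) \leq d_\sim(s,t) = 0$ whenever $s \sim t$. To verify the inequality for a bisimilar pair $s \sim t$: by Definition~\ref{def:bisim} we have $s \equiv_A t$ and $\ell(s) = \ell(t)$, so $\lbld{s}{t} = 0$; if both states are absorbing the value $\Delta_\lambda(d_\sim)(s,t) = \lbld{s}{t}$ is already $0$. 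If they are non-absorbing, $\rho(s) = \rho(t)$ forces $\rated{s}{t} = 0$, so it remains to show $\mathcal{K}_{d_\sim}(\tau(s),\tau(t)) = 0$. For this, build a coupling that lives on $\sim$ by distributing mass class-wise: for each equivalence class $C \in S/_\sim$ use $\tau(s)(C) = \tau(t)(C)$ and define $\omega$ inside $C \times C$ as the normalized product of the restrictions of $\tau(s)$ and $\tau(t)$ to $C$ (and $0$ outside $\bigcup_C C \times C$). This $\omega$ is a coupling of $(\tau(s),\tau(t))$ supported on $\sim$, hence $\sum_{u,v} d_\sim(u,v)\,\omega(u,v) = 0$.

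For the ($\Leftarrow$) direction I would take $R = \set{(s,t) \in S \times S}{\dist(s,t) = 0}$ and argue directly that $R$ is a stochastic bisimulation, whence $R \subseteq {\sim}$ by Definition~\ref{def:bisim}. That $R$ is an equivalence follows from the fact that $\dist$ is a pseudometric (established above). Given $s \mathrel{R} t$, unfold the fixed-point equation $\dist = \Delta_\lambda(\dist)$ at $(s,t)$: the value $1$ in the first branch rules out $s \not\equiv_A t$, and $\lbld{s}{t} = 0$ together with the fact that $d_L$ is a metric gives $\ell(s) = \ell(t)$. When $s,t \notin A$ the max being $0$ forces $\trd[\dist]{s}{t} = 0$, so both summands $\rated{s}{t}$ and $(1-\rated{s}{t})\,\mathcal{K}_{\dist}(\tau(s),\tau(t))$ vanish. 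From $\rated{s}{t} = \tv{\Exp[\rho(s)]}{\Exp[\rho(t)]} = 0$ and the injectivity of the map $r \mapsto \Exp[r]$ (two exponential laws with distinct rates have strictly positive total variation distance), I get $\rho(s) = \rho(t)$. The main obstacle is the transition condition: from $\mathcal{K}_{\dist}(\tau(s),\tau(t)) = 0$ I pick an optimal coupling $\omega \in \coupling{\tau(s)}{\tau(t)}$, and observe that $\omega$ is supported on $R$ (as $\omega(u,v) > 0$ would otherwise contribute a strictly positive summand). Splitting the marginal identities along the class $C$ and its complement then collapses $\tau(s)(C)$ and $\tau(t)(C)$ to the common quantity $\sum_{u,v \in C} \omega(u,v)$, giving condition (ii) of Definition~\ref{def:bisim}. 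Hence $R$ is a bisimulation, which completes the proof.
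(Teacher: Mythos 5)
Your proposal is correct and follows essentially the same route as the paper: the pseudometric property is transferred from the iterates $\Delta_\lambda^n(\mathbf{0})$ via Lemma~\ref{lem:preservepseudometric} and a limit/supremum argument, the direction $s \sim t \Rightarrow \dist(s,t)=0$ is obtained by showing the indicator-style function of the bisimulation is a pre-fixed point and invoking Tarski, and the converse is obtained by showing the kernel of $\dist$ is itself a stochastic bisimulation. The only cosmetic difference is that where the paper cites a lemma of Ferns et al.\ relating $\mathcal{K}_d(\mu,\nu)=0$ to equality of $\mu$ and $\nu$ on equivalence classes, you inline its proof (the class-wise product coupling in one direction, the support-of-an-optimal-coupling argument in the other), which is equally valid.
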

\begin{proof}
We first prove that $\dist$ is a pseudometric. By Lemma~\ref{lem:nonexpansive} and Banach's fixed point theorem, $\dist = \bigsqcup_{n \in \N} \Delta^n_\lambda(\mathbf{0})$. Clearly, $\mathbf{0}$ is a pseudometric.  Thus by Lemma~\ref{lem:preservepseudometric}, a simple induction on $n$ shows that, for all $n \in \N$, $\Delta^n_\lambda(\mathbf{0})$ is a pseudometric. Since the least upper bound with respect to $\sqsubseteq$ preserves pseudometrics, we have that $\dist$ is so.


Now we are left to prove that, for any $s,t \in S$, $s \sim t$ iff $\dist(s,t) = 0$.
\begin{trivlist} \topsep0ex
\item ($\Leftarrow$) We prove that $R = \set{ (s,t) }{ \dist(s,t) = 0 }$ is a stochastic bisimulation. Clearly, $R$ is an equivalence. Assume $(s,t) \in R$, then, by definition of $\Delta_\lambda$, one of the following holds:
\begin{enumerate}[label=(\roman*)]
  \item \label{itm:th:fixedbisimdist1} $s,t \in A$ and $\lbld{s}{t} = 0$; 
  \item \label{itm:th:fixedbisimdist2} $s,t \notin A$, $\lbld{s}{t} = 0$, and $\trd[\dist]{s}{t} = 0$.
\end{enumerate}
If \eqref{itm:th:fixedbisimdist1} holds, by $\lbld{s}{t} = 0$ we get that $\ell(s) = \ell(t)$.
If \eqref{itm:th:fixedbisimdist2} holds, we have $\rated{s}{t} = 0$ and $\mathcal{K}_{\dist}(\tau(s),\tau(t)) = 0$. By $\rated{s}{t} = 0$ we get $\Exp[\rho(s)] = \Exp[\rho(t)]$ and hence $\rho(s) = \rho(t)$. By \cite[Lemma 3.1]{FernsPP04}, $\mathcal{K}_{\dist}(\tau(s),\tau(t)) = 0$ implies that, for all $C \in S/_R$, $\tau(s)(C) = \tau(t)(C)$. Therefore $\mathrel{R}$ is a bisimulation.

\item ($\Rightarrow$) 
Let $R \subseteq S \times S$ be a stochastic bisimulation on $\M$, and define $d_R \colon S \times S \to [0,1]$ by $d_R(s,t) = 0$ if $(s,t) \in R$ and $d_R(s,t) = 1$ if $(s,t) \notin R$. We show that $\Delta_\lambda(d_R) \sqsubseteq d_R$. If $(s,t) \notin R$, then $d_R(s,t) = 1 \geq \Delta_\lambda(d_R)(s,t)$. If 
$(s,t) \in R$, then $\ell(s) = \ell(t)$ and one of the following holds:
\begin{enumerate}[label=(\roman*)]
	\item \label{itm:th:fixedbisimdist3} $s,t \in A$;
	\item \label{itm:th:fixedbisimdist4} $s,t \notin A$, $\rho(s) = \rho(t)$ and,
	$\forall C \in S/_{R}. \, \tau(s)(C) = \tau(t)(C)$.
\end{enumerate}
If \eqref{itm:th:fixedbisimdist3} holds, $\Delta_\lambda(d_R)(s,t) = \lbld{s}{t} = 0 = d_R(s,t)$. 
If \eqref{itm:th:fixedbisimdist4} holds, by \cite[Lemma 3.1]{FernsPP04} and the fact that, for all $C \in S/_{R}$, $\tau(s)(C) = \tau(t)(C)$, we have $\mathcal{K}_{d_R}(\tau(s),\tau(t)) = 0$. Moreover $\rated{s}{t} = 0$. This gives  that $\Delta_\lambda(d_R)(s,t) = 0 = d_R(s,t)$.

Since $\sim$ is a stochastic bisimulation, $\Delta_\lambda(d_\sim) \sqsubseteq d_\sim$, so that, by Tarski's fixed point theorem, $\dist \sqsubseteq d_\sim$. By definition of $d_\sim$ and  $\dist \sqsubseteq d_\sim$, $s \sim t$ implies $\dist(s,t) = 0$.
\qedhere
\end{trivlist}
\end{proof}

\begin{rem} \label{rmk:altenativedefoftimedistance}
Observe that Theorem~\ref{th:bisimpseudo} holds for alternative definitions of the functional $\Delta_\lambda$. 
An example is given when the functional $\trd[]{}{}$ in the definition of $\Delta_\lambda$ is given as 
\begin{equation}
  \label{eq:alterdef1}
  \trd[d]{s}{t} = \max \set{\frac{1}{M} \cdot |\rho(s) - \rho(t)|, \mathcal{K}_d(\tau(s),\tau(t))}{} \,, 
\end{equation}
where $M = \max_{s \in S} \rho(s)$ is used to rescale the symmetric difference $|\rho(s) - \rho(t)|$ to a value within $[0,1]$.
Another example is obtained by replacing the maximum above by a convex combination of the two values as below, 
for some $\alpha \in (0,1)$,
\begin{equation}
  \label{eq:alterdef1}
  \trd[d]{s}{t} = \alpha \cdot \frac{1}{M} |\rho(s) - \rho(t)| + (1- \alpha) \cdot \mathcal{K}_d(\tau(s),\tau(t)) \,, 
\end{equation}

So, what does it make a proposal for $\Delta_\lambda$ preferable to another? Although Theorem~\ref{th:bisimpseudo} is an important property for a behavioral pseudometric, it does not say much about the states that have distance different from zero. In this sense, a good behavioral metric should relate the distance with a concrete problem. Our definition of $\Delta_\lambda$, for example, is motivated by a result in~\cite{BacciLM:fossacs15} that states that the total variation distance of CTMCs (more generally, on semi-Markov chains) is logically characterized as the maximal difference w.r.t.\ the likelihood for two states to satisfy the same Metric Temporal Logic (MTL) formula~\cite{Koymans90}. It turns out that when $d_L$ is the discrete metric over $L$ (i.e., $d_L(a,b) = 0$ if $a = b$, and $1$ otherwise), $\dist$ bounds from above the total variation distance. This relates $\dist$ to the \emph{probabilistic model checking problem} of MTL-formulas against CTMCs. The above alternative proposals for a distance do not enjoy this property.
\qee
\end{rem}

\section{Complexity and Linear Programming representation} \label{sec:lp}

In this section, we study the problem of computing the bisimilarity distance by considering two different approaches. The former is an iterative method that approximates $\dist$ from below (resp.\ above) successively applying the operator $\Delta_\lambda$ starting from the least (resp.\ greatest) element in $\mDom$. The latter is based on a linear program characterization of $\dist$ that is based on the Kantorovich duality~\cite{Villani03}. In contrast to an analogous proposal in~\cite{ChenBW12}, our linear program has a number of constraints that is polynomially bounded in the size of the CTMC. As a consequence, the bisimilarity distance $\dist$ can be computed in polynomial time in the size of the CTMC.

\subsection{Iterative method} \label{sec:iterativemethod}
By Theorem~\ref{th:uniquefix}, for any $\epsilon > 0$, it follows that to get $\epsilon$-close to $\dist$, it suffices to iterate the application of the fixed point operator $\lceil \log_\lambda \epsilon \rceil$ times.
\begin{prop} \label{prop:approxiter}
For any $\epsilon > 0$ and $d \colon S \times S \to [0,1]$, $\norm{\dist}{\Delta^{\lceil \log_\lambda \epsilon \rceil}_\lambda(d)} \leq \epsilon$.
\end{prop}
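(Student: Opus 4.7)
The plan is to invoke the Lipschitz contraction property of $\Delta_\lambda$ established in Lemma~\ref{lem:nonexpansive} together with the fact that $\dist$ is a fixed point of $\Delta_\lambda$ (Theorem~\ref{th:uniquefix}). The core observation is that iterating a $\lambda$-Lipschitz map pulls any starting point geometrically fast toward the unique fixed point, and since every value in $\mDom$ lies in $[0,1]$, the initial error is at most $1$.

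Concretely, I would argue as follows. Since $\dist$ is a fixed point of $\Delta_\lambda$, we have $\Delta^n_\lambda(\dist) = \dist$ for every $n \in \N$. By Lemma~\ref{lem:nonexpansive}, $\Delta_\lambda$ is $\lambda$-Lipschitz continuous with respect to the supremum norm, so a routine induction on $n$ yields
\begin{equation*}
  \norm{\Delta^n_\lambda(d)}{\dist} = \norm{\Delta^n_\lambda(d)}{\Delta^n_\lambda(\dist)} \leq \lambda^n \cdot \norm{d}{\dist}.
\end{equation*}
Because $d, \dist \in \mDom = [0,1]^{S\times S}$, we have $\norm{d}{\dist} \leq 1$, and therefore
\begin{equation*}
  \norm{\Delta^n_\lambda(d)}{\dist} \leq \lambda^n.
\end{equation*}

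It then remains to verify that choosing $n = \lceil \log_\lambda \epsilon \rceil$ makes $\lambda^n \leq \epsilon$. Since $\lambda \in (0,1)$, the function $\log_\lambda$ is strictly decreasing, so $n \geq \log_\lambda \epsilon$ is equivalent to $\lambda^n \leq \lambda^{\log_\lambda \epsilon} = \epsilon$; this is precisely what the ceiling gives. Combining with the previous display yields the desired bound $\norm{\Delta^{\lceil \log_\lambda \epsilon \rceil}_\lambda(d)}{\dist} \leq \epsilon$.

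There is essentially no obstacle here: the proposition is a direct corollary of Banach-style contraction once Lemma~\ref{lem:nonexpansive} is in hand. The only mild subtlety to be careful about is the direction of the inequality when dividing by $\log \lambda < 0$ to pass between $\lambda^n \leq \epsilon$ and $n \geq \log_\lambda \epsilon$; I would flag this explicitly rather than silently applying monotonicity of $\log_\lambda$.
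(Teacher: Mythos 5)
Your proof is correct, but it takes a genuinely different — and in fact tighter — route than the paper's. The paper invokes the a priori Banach error bound of Theorem~\ref{th:uniquefix}, $\norm{\dist}{\Delta^n_\lambda(d)} \leq \frac{\lambda^n}{1-\lambda}\norm{\Delta_\lambda(d)}{d}$, bounds $\norm{\Delta_\lambda(d)}{d}$ by $1$, and then must pick $n$ so that $\frac{\lambda^n}{1-\lambda}\leq\epsilon$, which forces the extra comparison of $\lceil\log_\lambda\epsilon\rceil$ with $\log_\lambda(\epsilon-\epsilon\lambda)$. You instead exploit directly that $\dist$ is a fixed point: $\norm{\Delta^n_\lambda(d)}{\dist}=\norm{\Delta^n_\lambda(d)}{\Delta^n_\lambda(\dist)}\leq\lambda^n\norm{d}{\dist}\leq\lambda^n$, after which $\lambda^{\lceil\log_\lambda\epsilon\rceil}\leq\epsilon$ is immediate from monotonicity. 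Your bound $\lambda^n$ improves on $\frac{\lambda^n}{1-\lambda}$ by the factor $\frac{1}{1-\lambda}$, and this is not merely cosmetic: the paper's inequality $\lceil\log_\lambda\epsilon\rceil\geq\log_\lambda(\epsilon-\epsilon\lambda)$ fails in general (e.g.\ $\lambda=\epsilon=\frac{1}{2}$ gives $1\geq 2$), so the direct contraction-to-the-fixed-point argument you give is the one that actually makes $\lceil\log_\lambda\epsilon\rceil$ iterations sufficient. The only (harmless) caveat, common to both arguments, is the degenerate case $\epsilon\geq 1$, where $\lceil\log_\lambda\epsilon\rceil\leq 0$ and the claim holds trivially because all values lie in $[0,1]$.
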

\begin{proof}
By Theorem~\ref{th:uniquefix} we have $\norm{\dist}{\Delta^n_\lambda(d)} \leq \frac{\lambda^n}{1-\lambda} \norm{\Delta_\lambda(d)}{d}$ and by
$\norm{\Delta_\lambda(d)}{d} \leq 1$, we have $\norm{\dist}{\Delta^n_\lambda(d)} \leq \frac{\lambda^n}{1-\lambda}$. For $n = \log_\lambda (\epsilon - \epsilon \lambda)$, we have $\epsilon = \frac{\lambda^n}{1-\lambda}$. Therefore, by Lemma~\ref{lem:nonexpansive} and ${\lceil \log_\lambda \epsilon \rceil} \geq {\log_\lambda (\epsilon - \epsilon \lambda)}$, we have $\norm{\dist}{\Delta^{\lceil \log_\lambda \epsilon \rceil}_\lambda(d)} \leq \epsilon$.
\end{proof}

By the above result, we obtain a simple method for approximating $\dist$. If the starting point is $\mathbf{0}$ we obtain an under-approximation, whereas starting from $\mathbf{1}$ we get an over-approximation. Both the approximations can be taken arbitrary close to the exact value.

%
However, as shown in the following example, the exact distance value cannot be reached in general. This holds for any discount factor.
\begin{exa}[\cite{ChenBW12}]
Consider the $\{\text{red}, \text{blue} \}$-labeled CTMC below. 
\begin{center}
\begin{tikzpicture}[
 label1/.style={circle, draw=red!50,fill=red!20, thin, circle split, inner sep=0.6mm},
 label2/.style={circle,draw=blue!50,fill=blue!20, thin, circle split, inner sep=0.6mm}]
\def\nodesep{2.5cm}
\draw 
  (0,0) node[label1] (s) {$s$ \nodepart{lower} $1$}
  ($(s) +(right:\nodesep)$) node[label1] (t) {$t$ \nodepart{lower} $1$}
  ($(t) +(right:\nodesep)$) node[label2] (u) {$u$ \nodepart{lower} $1$};

\path[-latex, font=\small]
  (s) edge[loop above] node[above] {$1$} (s)
  (t) edge[loop above] node[above] {$\lambda$} (t)
       edge node[above] {$1-\lambda$} (u)
  (u) edge[loop above] node[above] {$1$} (u);
\end{tikzpicture}
\end{center}
Let $d_L \colon L \times L \to [0,1]$ be the discrete metric over $L$, defined as $d_L(l,l') = 0$ if $l = l'$ and $1$ otherwise.
One can check that $\dist(s,t) = \frac{\lambda - \lambda^2}{1- \lambda^2}$ and, for all $n \in \N$, $\Delta^n_{\lambda}(\mathbf{0})(s,t) \leq \frac{\lambda - \lambda^{2n+1}}{1+\lambda}$. Since, for all $n \in \N$, $\frac{\lambda - \lambda^{2n+1}}{1+\lambda} < \frac{\lambda - \lambda^2}{1- \lambda^2}$, we have that the fixed point cannot be reached in a finite number of iterations. 
\qee
\end{exa}

In \cite{ChenBW12} is shown that the bisimilarity distance of Desharnais et al.~\cite{DesharnaisGJP04} can be computed \emph{exactly} by iterating the fixed point operator up to a precision that allows one to use the continued fraction algorithm to yield the exact value of the fixed point. This method can be applied provided that the pseudometric has rational values. In their case, this is ensured assuming that the transition probabilities are rational. Unfortunately, in our case this cannot be ensured under the same conditions. Indeed, the total variation distance between exponential distributions with rates $r, r' > 0$ is analytically 
solved as follows
\begin{equation}
\tv{\Exp[r]}{\Exp[r']} = 
\begin{cases}
0 &\text{if $r = r'$} \\
\displaystyle\left| \left( \frac{r'}{r} \right)^{\frac{r}{r - r'}} - 
\left( \frac{r'}{r} \right)^{\frac{r'}{r - r'}} \right| & \text{otherwise}
\end{cases}
\label{eq:tvexp}
\end{equation}
thus, even restricting to rational exit-rates and probabilities, the distance may assume irrational values. 
As a consequence, we cannot assume to compute, in general, the exact distance values.

\subsection{Linear Program Characterization} \label{sec:LPcharact}
Our linear program characterization leverages on two key results. The first one is the uniqueness of the fixed point of $\Delta_\lambda$ (Theorem~\ref{th:uniquefix}). The second one is a \emph{dual} linear program characterization of the Kantorovich distance.

For $S$ finite, $d \colon S \times S \to [0,1]$, and $\mu, \nu \in \Dist(S)$, the value $\mathcal{K}_d(\mu,\nu)$ coincides with the optimal value of the following linear program
\begin{align}
\mathcal{K}_d(\mu,\nu) &= 
\begin{aligned}[t]
\min_{\omega} & \: 
\textstyle \sum_{u,v \in S} d(u,v) \cdot \omega_{u,v} \\[-0.5ex]
	&\textstyle \sum_{v} \omega_{u,v} = \mu(u) && \forall u \in S \\
	&\textstyle \sum_{u} \omega_{u,v} = \nu(v) && \forall v \in S  \\
	& \omega_{u,v} \geq 0 && \forall u,v \in S \, .
\end{aligned}
\label{eq:primalK}
\intertext{By a standard argument in linear optimization, the above can be alternatively represented by the following dual linear program}
\mathcal{K}_d(\mu,\nu) &= 
\begin{aligned}[t]
\max_y & \textstyle\sum_{u \in S} (\mu(u) - \nu(u) )\cdot y_u \\[-1ex]
    & y_{u} - y_{v} \leq d(u,v) && \forall u,v \in S \, .
\end{aligned}
\label{eq:dualK}
\end{align}
This alternative characterization is a special case of a more general result commonly known as the \emph{Kantorovich duality} and extensively studied in linear optimization theory (see~\cite{Villani03}).

Let $n = |S|$ and $h = n - |A|$. Consider the linear program in Figure~\ref{fig:LP}, hereafter denoted by $D$, with variables $d \in \R[]^{n^2}$, $y \in \R[]^{h^2 + n}$ and  $k,m \in \R[]^{h^2}$. The constraints of $D$ are easily seen to be bounded and feasible. Moreover, the objective function of $D$ attains its optimal value when the vectors $k$ and $m$ are maximized in each component\footnote{Indeed, for arbitrary $x,y \in \R^n$ and $n \in \N$, if $x_i \leq y_i$ for all $i = 1..n$, then $\sum_{i =1}^n x_i \leq \sum_{i =1}^n y_i$.}. Therefore, according to \eqref{eq:dualK}, an optimal solution $(d^*,y^*,k^*,m^*) \in D$ satisfies the following equalities
\begin{align*}
&\forall s,t \not\in A. & \:  m^*_{s,t} &= \min \{\lbld{s}{t}, \lambda \big( \rated{s}{t} + (1 - \rated{s}{t}) \mathcal{K}_{d^*}(\tau(s), \tau(t)) \big) \} \,, \\
&\forall s,t \not\in A. \:  & k^*_{s,t} &= \mathcal{K}_{d^*}(\tau(s), \tau(t)) \,.
\end{align*}
By the above equalities and the constraints of $D$, it follows that $d^*$ is a fixed point of $\Delta_\lambda$. Since the distance is the unique fixed point of $\Delta_\lambda$, $d^* = \dist$.
\begin{figure}[t]
\fbox{
\parbox[c][5.4cm][c]{\textwidth}{
\begin{align*}
\operatorname*{arg\,max}_{d,y,k,m} & \: 
\textstyle \sum_{s,t \not\in A} k_{s,t} + m_{s,t} \\
    & d_{s,t} = 1 && \forall s, t \in S.\: s \not\equiv_A t  \\
    & d_{s,t} = \lbld{s}{t} && \forall s, t \in A  \\
    & d_{s,t} = \lbld{s}{t} + \lambda \big( \rated{s}{t} + (1 - \rated{s}{t}) k_{s,t} \big) - m_{s,t} && \forall s, t \not\in A \\
    & m_{s,t} \leq \lbld{s}{t}   && \forall s,t \not\in A \\
    & m_{s,t} \leq  \lambda \big( \rated{s}{t} + (1 - \rated{s}{t}) k_{s,t} \big)  && \forall s,t \not\in A \\
    & k_{s,t} = \textstyle \sum_{u \in S} (\tau(s)(u) - \tau(t)(u))\cdot y_u^{s,t}  && \forall s,t \not\in A\\ 
    & y_{u}^{s,t} - y_{v}^{s,t} \leq d_{u,v} && \forall s,t \not\in A, \forall u,v \in S
\end{align*}}}
\caption{Linear program characterization of $\dist$ for $\lambda \in (0,1)$ and $\M = (S, A, \tau, \rho, \ell)$.}
\label{fig:LP}
\end{figure}
\begin{thm} \label{th:LPchar}
Let $(d^*,y^*,k^*,m^*)$ be a solution of $D$. Then, for all $s,t \in S$, $d^*_{s,t} = \dist(s,t)$.
\end{thm}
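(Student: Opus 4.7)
The plan is to show that the $d$-component of any optimal solution of $D$ is a fixed point of $\Delta_\lambda$; then Theorem~\ref{th:uniquefix} forces $d^* = \dist$, which is the desired conclusion. First I would record that $D$ is feasible and bounded (e.g.\ setting $d_{s,t} = 1$ for non-absorbing pairs, together with easily chosen $y, k, m$, gives a feasible point), so an optimum exists.

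The heart of the argument is the pair of saturation identities suggested in the excerpt: at any optimal $(d^*, y^*, k^*, m^*)$ and for every $s, t \notin A$,
\begin{gather*}
  k^*_{s,t} = \mathcal{K}_{d^*}(\tau(s), \tau(t)), \\
  m^*_{s,t} = \min\{\lbld{s}{t},\ \lambda(\rated{s}{t} + (1 - \rated{s}{t}) k^*_{s,t})\}.
\end{gather*}
For the first identity, the block of constraints involving $y^{s,t}$ together with the defining equality of $k_{s,t}$ is exactly the feasible region of the Kantorovich dual~\eqref{eq:dualK} for the distance $d^*$, so $k^*_{s,t} \leq \mathcal{K}_{d^*}(\tau(s), \tau(t))$. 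If strict inequality held for some pair, I would replace the $y^{s,t}$-block by an optimizer of~\eqref{eq:dualK}, leave $m^*_{s,t}$ unchanged, and enlarge $d^*_{s,t}$ accordingly so as to preserve the equality $d_{s,t} = \lbld{s}{t} + \lambda(\rated{s}{t} + (1 - \rated{s}{t}) k_{s,t}) - m_{s,t}$. Since the only other occurrences of $d_{s,t}$ are as an upper bound on differences $y^{s',t'}_u - y^{s',t'}_v$, the perturbation preserves feasibility and strictly increases the objective, contradicting optimality.

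For the second identity, eliminating $m$ via the equality turns feasibility into the two inequalities $d_{s,t} \geq \lbld{s}{t}$ and $d_{s,t} \geq \lambda(\rated{s}{t} + (1-\rated{s}{t}) k_{s,t})$, and converts the objective (up to data-constants) to $\sum_{s,t \notin A}[(1 + \lambda(1 - \rated{s}{t})) k_{s,t} - d_{s,t}]$. Here $d_{s,t}$ carries a negative coefficient and otherwise appears only as an upper bound on $y$-differences; since those differences, by the first identity, are already at their Kantorovich optima for $d^*$, at the optimum $d^*_{s,t}$ is pushed down to its lower bound $\max\{\lbld{s}{t}, \lambda(\rated{s}{t} + (1-\rated{s}{t}) k^*_{s,t})\}$. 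Substituting this back into the equality yields the claimed formula for $m^*_{s,t}$.

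Combining the two identities, the equality constraint of $D$ for non-absorbing pairs rewrites as
\begin{equation*}
  d^*_{s,t} = \lbld{s}{t} + \lambda\, \trd[d^*]{s}{t} - \min\{\lbld{s}{t}, \lambda\, \trd[d^*]{s}{t}\} = \max\{\lbld{s}{t}, \lambda\, \trd[d^*]{s}{t}\} = \Delta_\lambda(d^*)(s,t),
\end{equation*}
while for $s \not\equiv_A t$ and for $s,t \in A$ the remaining equality constraints of $D$ reproduce the first two clauses in the definition of $\Delta_\lambda$. Hence $d^*$ is a fixed point of $\Delta_\lambda$, and Theorem~\ref{th:uniquefix} gives $d^* = \dist$. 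The main obstacle I expect is making the second saturation identity rigorous: the variable $d$ is entangled with $k$, $m$, and $y$ both through the equality constraint and through its role as an upper bound in the Kantorovich block, so the informal ``maximize each component'' reasoning sketched in the excerpt must be replaced by the variable-elimination argument above.
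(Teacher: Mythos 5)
Your proposal follows the same route as the paper's proof: establish the two saturation identities $k^*_{s,t}=\mathcal{K}_{d^*}(\tau(s),\tau(t))$ and $m^*_{s,t}=\min\{\lbld{s}{t},\,\lambda(\rated{s}{t}+(1-\rated{s}{t})k^*_{s,t})\}$, feed them into the equality constraint to get $d^*=\Delta_\lambda(d^*)$, and conclude by Theorem~\ref{th:uniquefix}. Your perturbation argument for the first identity is sound, and in fact more explicit than the paper's: increasing $d_{s,t}$ only relaxes constraints (it appears elsewhere only as an upper bound on $y$-differences), so replacing the $y^{s,t}$-block by a dual optimizer and absorbing the resulting increase of $k_{s,t}$ into $d_{s,t}$ preserves feasibility and strictly improves the objective.

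The gap is in the second identity. After eliminating $m$, you claim that the coefficient $-1$ of $d_{s,t}$ in the objective pushes $d^*_{s,t}$ down to $\max\{\lbld{s}{t},\lambda(\rated{s}{t}+(1-\rated{s}{t})k^*_{s,t})\}$ because the $y$-differences are ``already at their Kantorovich optima for $d^*$.'' But decreasing $d_{s,t}$ by $\varepsilon$ is not a feasibility-preserving move: it tightens every constraint $y^{s',t'}_u-y^{s',t'}_v\leq d_{s,t}$, and the achievable value of $k_{s',t'}=\mathcal{K}_{d}(\tau(s'),\tau(t'))$ then drops at rate $\omega_{s',t'}(s,t)$, where $\omega_{s',t'}$ is an optimal coupling. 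Summed over all pairs $(s',t')$ that route mass through $(s,t)$ and weighted by their objective coefficients $1+\lambda(1-\rated{s'}{t'})>1$, this loss can exceed the gain of $\varepsilon$ from the $-d_{s,t}$ term, so the local ``push $d$ down'' argument does not go through; the entanglement you yourself flag as the main obstacle is not resolved by the variable elimination. This is exactly why the paper introduces the auxiliary program $D(d^*)$ in which $d$ is frozen: there the equality linking $m_{s,t}$ to $d_{s,t}$ is absent, so $m_{s,t}$ can be raised to its upper bound for free, and the saturation is transferred back to $D$ by comparing the optimal values of $D$ and $D(d')$ (the paper's Cases 1 and 2). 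To close your proof you would need either to adopt that device or to replace the local perturbation by a genuinely global comparison, e.g.\ exhibiting the feasible point of $D$ built from $\dist$ and showing no feasible point beats its objective value.
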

\begin{proof}
Consider $d^*\in \R[]^{|S|^2}$ and the linear program above, hereafter denoted by $D(d^*)$
\begin{align*}
\operatorname*{arg\,max}_{y,k,m} & \: 
\textstyle \sum_{s,t \not\in A} k_{s,t} + m_{s,t} \\
    & m_{s,t} \leq \lbld{s}{t}   && \forall s,t \not\in A \\
    & m_{s,t} \leq  \lambda \big( \rated{s}{t} + (1 - \rated{s}{t}) k_{s,t} \big)  && \forall s,t \not\in A \\
    & k_{s,t} = \textstyle \sum_{u \in S} (\tau(s)(u) - \tau(t)(u))\cdot y_u^{s,t}  && \forall s,t \not\in A\\ 
    & y_{u}^{s,t} - y_{v}^{s,t} \leq d^*_{u,v} && \forall s,t \not\in A, \forall u,v \in S
\end{align*}
Any feasible solution of $D(d^*)$ can be improved by increasing any value of $k_{s,t}$ or $m_{s,t}$ for some $s,t \notin A$. On the one hand, the value of $k_{s,t}$ (for some $s,t \notin A$) can be increased independently from that of $(m_{s,t})_{s,t \notin A}$ since it only depends on the values of $(y^{s,t}_u)_{u \in S}$. For this reason, if we denote by $k^*_{s,t}$ an optimal value for $k_{s,t}$ we have the following equality
\begin{equation}
\begin{aligned}
     k^*_{s,t} = \max_{y} & \textstyle \sum_{u \in S} (\tau(s)(u) - \tau(t)(u) )\cdot y_u\\[-1ex]
     & y_{u} - y_{v} \leq d^*_{u,v} && \forall u,v \in S
\end{aligned} \label{eq:localKdual}
\end{equation}
Thus, by Equation~\eqref{eq:dualK}, we have that $k^*_{s,t} = \mathcal{K}_{d^*}(\tau(s), \tau(t))$ for all $s,t \not\in A$. 
On the other hand, the value of $m_{s,t}$ (for some $s,t \notin A$) is bounded from above by the constant $\lbld{s}{t}$ and the $\lambda \big( \rated{s}{t} + (1 - \rated{s}{t}) k_{s,t} \big)$. The value of $\lambda \big( \rated{s}{t} + (1 - \rated{s}{t}) k_{s,t} \big)$ increases with that of $k_{s,t}$, hence, if if we denote by $m^*_{s,t}$ the optimal value for $m_{s,t}$, we have that
\begin{align}
&\forall s,t \not\in A. & \:  m^*_{s,t} &= \min \big\{\lbld{s}{t},\, \lambda \big( \rated{s}{t} + (1 - \rated{s}{t}) k^*_{s,t} \big) \big\} \label{eq:m}\\
&\forall s,t \not\in A. \:  & k^*_{s,t} &= \mathcal{K}_{d^*}(\tau(s), \tau(t)) \, . \label{eq:k}
\end{align}
Let $(d',y',k',m')$ be an optimal solution of $D$ and $(y'',k'',m'')$ be an optimal solution of $D(d')$. 
Notice that the optimal value $\sum_{s,t \not\in A} k''_{s,t} + m''_{s,t}$ of $D(d')$ (i.e., $\sum_{s,t \not\in A} k''_{s,t} + m''_{s,t}$) is greater than or equal to the optimal value of $D$ (i.e., $\sum_{s,t \not\in A} k'_{s,t} + m'_{s,t}$), since the constraints of $D(d')$ are a subset of those of $D$.
Consider now two possible cases. \\
(Case 1:  $k' = k''$ and $m' = m''$) For $s,t \notin A$ we have that the following equalities hold
\begin{align*}
	d'_{s,t} &= \lbld{s}{t} + \lambda \big( \rated{s}{t} + (1 - \rated{s}{t}) \; k'_{s,t} \big) - m'_{s,t}  \tag{by $(d',y',k',m')$ feasible for $D$} \\
	&= \lbld{s}{t} + \lambda \big( \rated{s}{t} + (1 - \rated{s}{t}) \; k''_{s,t} \big) - m''_{s,t} \tag{by $k' = k''$ and $m' = m''$} \\
	&= \max \{ \lbld{s}{t} , \,  \lambda \big( \rated{s}{t} + (1 - \rated{s}{t}) \; k''_{s,t} \big) \}  \tag{by \eqref{eq:m}} \\
	&= \max \{ \lbld{s}{t}, \lambda \big( \rated{s}{t} + (1 - \rated{s}{t}) \; \mathcal{K}_{d'}(\tau(s),\tau(t)) \big) \} \tag{by \eqref{eq:k}} \\
	&= \max \{ \lbld{s}{t}, \lambda \mathcal{T}(d')(s,t) \} \tag{by def.\ $\mathcal{T}$} \\	
	&= \Delta(d')(s,t) \tag{by def.\ $\Delta_\lambda$}
\end{align*}
From the above equality and feasibility of $(d',y',k',m')$ for $D$, we have $d'_{s,t} = \Delta_\lambda(d')(s,t)$ for all $s,t \in S$. Thus, by Theorem~\ref{th:uniquefix} we obtain $d'_{s,t} = \dist(s,t)$ for all $s,t \in S$. \\
(Case 2: $k' \neq k''$ or $m' \neq m''$) By the previous case we have that the optimal value of $D$ is equal to $\sum_{s,t \not\in A} k''_{s,t} + m''_{s,t}$. This is a sum of non negative values, indeed by $\dist(s,t) \geq 0$ for all $s,t \in S$ and Equations~\eqref{eq:m} and \eqref{eq:k}, we have that $k''_{s,t} \geq 0$ and $m''_{s,t} \geq 0$ for all $s,t \notin S$. Thus, if $k' \neq k''$ or $m' \neq m''$ we have $\sum_{s,t \not\in A} k''_{s,t} + m''_{s,t} > \sum_{s,t \not\in A} k'_{s,t} + m'_{s,t}$. This contradicts the assumption that $(d',y',k',m')$ is optimal for $D$.

This proves that if $(d',y',k',m')$ is an optimal solution of $D$ then $d' = \dist$.
\end{proof}
In~\cite{ChenBW12} it has been shown that the bisimilarity distance of Desharnais et al. on discrete-time Markov chains can be computed in polynomial time as the solution of a linear program that can be solved by using the ellipsoid method.
However, in their proposal the number of constraints may be exponential in the size of the model. This is due to the fact that the Kantorovich distance is resolved listing all the couplings that correspond to vertices of the transportation polytopes involved in the definition of the distance~\cite{Demuth61}.

 In contrast, our proposal has a number of constraints and unknowns\footnote{Actually, the variables $k$ are used only for ease the presentation but they can be removed by substitution.} bounded by $3 |S|^2 + |S|^4$ and $|S|^3 + 2 |S|^2$, respectively.
This allows one to use general algorithms for solving LP problems (such as the \emph{simplex} and the \emph{interior-point} methods) that, in practice, are more efficient than the ellipsoid method.

\begin{rem}[The undiscounted case]
Our LP characterization (Theorem~\ref{th:LPchar}) exploits the fact that, for $\lambda \in (0,1)$, the functional operator $\Delta_\lambda$ has unique fixed point (Theorem~\ref{th:uniquefix}). In the so called \emph{undiscounted case}, i.e., when $\lambda = 1$, the fixed point is not unique anymore and the LP characterization does not work directly. To obtain a similar LP characterization with a number of constraints polynomial in the size of the CTMC, by following~\cite{ChenBW12} one may modify the operator $\Delta_1$ to an operator $\Delta'$ that forces the distance to be $0$ if the states are stochastic bisimilar, otherwise it behaves as $\Delta_1$. The operator $\Delta'$ has unique fixed point when $d_L$ is the discrete metric over $L$ (i.e., $d_L(a,b) = 0$ if $a = b$, and $1$ otherwise), but for other choices of $d_L$ the uniqueness may not hold.
\qee
\end{rem}

Moreover, this allows us to state the following complexity result.
\begin{thm} \label{th:distancecomplexity}
$\dist$ can be computed in polynomial-time in the size of $\M$.
\end{thm}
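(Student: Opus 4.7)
The plan is to invoke the linear program characterization $D$ of Theorem~\ref{th:LPchar} and then appeal to the polynomial-time solvability of linear programming. Concretely, given a CTMC $\M = (S, A, \tau, \rho, \ell)$ of size $n = |S|$, the LP $D$ shown in Figure~\ref{fig:LP} has $|S|^3 + 2|S|^2$ variables and at most $3|S|^2 + |S|^4$ constraints, i.e., both are bounded by a polynomial in $n$. By Theorem~\ref{th:LPchar}, the $d$-component of any optimal solution of $D$ coincides with $\dist$, so it suffices to exhibit a polynomial-time procedure that produces such a solution.

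First I would describe the construction of $D$ explicitly from the input: the variables are indexed by pairs (resp.\ triples) over $S$, and the constraints are obtained by iterating over these pairs/triples; this data can be laid down in polynomial time. The coefficients falling outside $\{0,1,\lambda,-1\}$ are the values $\lbld{s}{t} = d_L(\ell(s),\ell(t))$ and $\rated{s}{t} = \tv{\Exp[\rho(s)]}{\Exp[\rho(t)]}$, together with the entries of $\tau(s)$. The entries of $\tau(s)$ and the values $\lbld{s}{t}$ are part of the input. The only delicate coefficients are the $\rated{s}{t}$, which by~\eqref{eq:tvexp} are given in closed form and, for rational exit rates, can be approximated to any desired accuracy $2^{-p}$ in time polynomial in $p$ and in the bit-size of the input; one obtains a rational LP $\tilde{D}$ whose coefficients are $2^{-p}$-close to those of $D$.

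Next I would solve $\tilde{D}$ by a polynomial-time LP algorithm (the ellipsoid method, or an interior-point method) in time polynomial in the bit-size of $\tilde{D}$, hence polynomial in $n$ and $p$. Choosing $p$ polynomial in $n$ is enough to recover $\dist$: by Lemma~\ref{lem:nonexpansive} the fixed-point operator $\Delta_\lambda$ is $\lambda$-Lipschitz, so the unique fixed point $\dist$ given by Theorem~\ref{th:uniquefix} depends Lipschitz-continuously on the coefficients of the LP, and a $2^{-p}$ perturbation of the coefficients propagates to a $\mathrm{poly}(n)\cdot 2^{-p}$ perturbation of the optimum. Combined with a standard continued-fraction rounding argument (as used in \cite{ChenBW12}) one can round the approximate optimum to the exact rational value whenever $\dist$ is rational; when $\dist$ is irrational (which, as noted after \eqref{eq:tvexp}, may genuinely occur) the output is naturally interpreted as a polynomial-size rational approximation, which is the standard notion of ``computing in polynomial time'' for LPs with algebraic data.

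The main obstacle I expect is precisely this coefficient-precision issue: unlike in~\cite{ChenBW12}, we cannot assume that $\dist$ itself is rational even for rational inputs, so the polynomial-time claim must be phrased against the standard Turing-model conventions for LP with real algebraic coefficients. Once that is handled, the reduction is immediate: the polynomial bound on the number of variables and constraints of $D$ is the substantive content, and it is what distinguishes our LP from the one in~\cite{ChenBW12}, where the constraints could blow up exponentially and forced reliance on the ellipsoid method as a separation oracle rather than on direct LP solving.
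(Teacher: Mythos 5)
Your proof is correct and follows essentially the same route as the paper: reduce to the LP $D$ of Theorem~\ref{th:LPchar}, observe that its numbers of variables and constraints are polynomially bounded in the size of $\M$, and solve it with a polynomial-time LP method. The coefficient-precision caveat you raise about the irrational values $\rated{s}{t}$ is exactly the point the paper concedes in the remark immediately following the theorem (where ``computed'' is downgraded to ``approximated'', with the coefficients obtained by polynomial-time Newton--Raphson approximation), so your treatment is if anything slightly more careful than the paper's stated proof, which glosses over this and simply invokes the ellipsoid method with a na\"ive separation oracle.
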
 
\begin{proof}
By Theorem~\ref{th:LPchar}, $\dist$ can be computed within the time it takes to construct and solve $D$. $D$ has a number of constraints and unknowns that is bounded by a polynomial in the size of $\M$, therefore its construction can be performed in polynomial time. For the same reason, $D$ admits a polynomial time separation algorithm: whenever a solution is given, its feasibility is checked by scanning each inequality in $D$; otherwise the first encountered inequality that is not satisfied is returned as a separation hyperplane. Therefore, the thesis follows by solving $D$ using the ellipsoid method together with the na\"ive separation algorithm described above.
\end{proof}

\begin{rem}
In Theorem~\ref{th:distancecomplexity} the term ``computed'' should be replaced by ``approximated''. We already discussed about the impossibility of obtaining an \emph{exact} value of $\dist$ (the total variations distance between exponential distributions may assume irrational values!). 

In fact, for the construction of the linear program $D$ one has to use an approximated (rational) version of the coefficients $\rated{s}{t}$, for all $s,t \notin A$, say $e_{s,t}$. To ensure that the optimal solution $d^*$ of $D$ is at most $\varepsilon$ apart from $\dist$ (i.e., $|\dist(s,t) - d^*_{s,t} | \leq \varepsilon$, for all $s,t \in S$) one only needs that $|\rated{s}{t} - e_{s,t} | \leq \frac{\varepsilon}{2}$, for all $s,t \notin A$. 
%
%
This can be done using the Newton-Raphson's iteration algorithm for approximating the $n$-square root of a number $\zeta \in \Q$ up to a precision $\epsilon > 0$. This method is known to be polynomially computable in the size of the representation of $\zeta$, $n$ and $\epsilon$~\cite{Alt79}. Hence $\dist$ can be approximated up to any precision $\varepsilon > 0$ in polynomial-time in the size of $\M$. 
\qee
\end{rem}

\section{Alternative Characterization of the Pseudometric} \label{sec:couplchar}

In the following, we propose an alternative characterization of the bisimilarity distance $\dist$, based on the notion of \emph{coupling structure}. Our result generalizes the one proposed in~\cite{ChenBW12,BacciLM:tacas13} for MCs, to the continuous-time settings.

\begin{defi}[Coupling Structure] \label{def:coupling}
Let $\mathcal{M} = (S, A, \pi, \ell)$ be a CTMC. A \emph{coupling structure} for $\M$ is a function $\C \colon (S\setminus A) \times (S\setminus A) \to \Dist(S \times S)$ such that, for all $s,t \notin A$, $\C(s,t) \in \coupling{\tau(s)}{\tau(t)}$.
\end{defi}
Intuitively, a coupling structure for $\M$ can be thought of as an $S \times S$-indexed collection of joint probability distributions, each having left/right marginals equal to $\tau$.

The following definition adapts the definition of the operator $\Delta_\lambda$ (see Definition~\ref{def:deltaoperator}) with respect to the notion of coupling structure for a CTMC.
\begin{defi}
Let $\M = (S, A, \tau, \rho, \ell)$ be CTMC, $\C$ a coupling structure for $\M$, and $\lambda \in (0,1)$ a \emph{discount factor}. The function $\Gamma^\C_\lambda \colon \mDom \to \mDom$ is defined as follows, for $d \colon S \times S \to [0,1]$ and $s, t \in S$
\begin{equation*}
  \Gamma^\C_\lambda(d)(s,t) =
  \begin{cases}
    1 & \text{if $s \not\equiv_A t$} \\
    \lbld{s}{t} & \text{if $s,t \in A$} \\
    \max \set{\lbld{s}{t}, \lambda \cdot \trcd{s}{t} }{} & \text{if $s,t \notin A$} \\
  \end{cases}
\end{equation*}
where $\lbld{}{}, \rated{}{} \colon S \times S \to [0,1]$ are as in Definition~\ref{def:deltaoperator} and $\trcd[]{}{} \colon \mDom \to \mDom$ is given by
\begin{equation*}
  \textstyle
  \trcd{s}{t} = \rated{s}{t}  + (1 - \rated{s}{t}) \cdot \sum_{u,v \in S} d(u,v) \cdot \C(s,t)(u,v)  \,.
\end{equation*}
\end{defi}
Recall that the Kantorovich distance between two distributions $\mu$ and $\nu$ is defined as $\mathcal{K}_d(\mu,\nu) = \min_\omega \sum_{u,v \in S} d(u,v) \cdot \omega(u,v)$, where the minimum is taken over all the possible couplings $\omega \in \coupling{\mu}{\nu}$. Thus, the operator $\Gamma^\C_\lambda$ can intuitively be thought of as a possible instance of $\Delta^\M_\lambda$ with respect to a fixed choice of the couplings given by $\C$. 

One can easily check that $\Gamma^\C_\lambda$ is monotone, thus, by Tarski's fixed point theorem, it admits least and greatest fixed points. The least fixed point, in particular, will be denoted by $\discr{C}$ and referred to as the \emph{$\lambda$-discrepancy} of $\mathcal{C}$.

%

%
\begin{thm}[Minimum coupling] \label{th:coupling-dist}
$\dist = \min \set{\discr{C}} {\text{$\mathcal{C}$ coupling structure for $\mathcal{M}$}}$.
\end{thm}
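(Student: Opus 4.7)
The plan is to establish the equality by proving both inequalities, exploiting the variational nature of the Kantorovich distance. The key structural observation is that $\Delta_\lambda$ is pointwise the infimum over coupling structures of $\Gamma^\C_\lambda$: for every $d \in \mDom$, every coupling structure $\C$, and every $s,t \notin A$, one has
\begin{equation*}
  \mathcal{K}_d(\tau(s),\tau(t)) \;\leq\; \textstyle\sum_{u,v} d(u,v)\cdot \C(s,t)(u,v) \,,
\end{equation*}
because the right-hand side is the cost of the particular coupling $\C(s,t) \in \coupling{\tau(s)}{\tau(t)}$ whereas the left-hand side is the minimum of this cost over all couplings. Multiplying by $(1-\rated{s}{t}) \geq 0$ and adding $\rated{s}{t}$ gives $\trd[d]{s}{t} \leq \trcd[d]{s}{t}$. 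Since the other clauses in the definitions of $\Delta_\lambda$ and $\Gamma^\C_\lambda$ coincide, monotonicity of $\max$ and multiplication by $\lambda > 0$ yield $\Delta_\lambda(d) \sqsubseteq \Gamma^\C_\lambda(d)$ for every $d \in \mDom$ and every coupling structure $\C$.

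First, I would prove $\dist \sqsubseteq \discr{C}$ for every coupling structure $\C$. Since $\discr{C}$ is the least fixed point of $\Gamma^\C_\lambda$, we have $\Gamma^\C_\lambda(\discr{C}) = \discr{C}$, and by the inequality above $\Delta_\lambda(\discr{C}) \sqsubseteq \Gamma^\C_\lambda(\discr{C}) = \discr{C}$. Hence $\discr{C}$ is a post-fixed point of $\Delta_\lambda$, so Tarski's fixed point theorem (applied to the least fixed point $\dist$) gives $\dist \sqsubseteq \discr{C}$. This shows $\dist \sqsubseteq \bigsqcap_\C \discr{C}$.

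Next, I would exhibit a coupling structure $\C^*$ that attains the minimum. For each pair $s,t \notin A$, choose $\C^*(s,t) \in \coupling{\tau(s)}{\tau(t)}$ to be an \emph{optimal} coupling for the Kantorovich distance $\mathcal{K}_{\dist}(\tau(s),\tau(t))$; such a coupling exists because the minimum in the definition of $\mathcal{K}_d$ is attained (as recalled in the footnote after the definition of $\mathcal{K}_d$ via Fenchel--Rockafellar duality). With this choice, $\trcd[\dist]{s}{t} = \rated{s}{t} + (1-\rated{s}{t}) \cdot \mathcal{K}_{\dist}(\tau(s),\tau(t)) = \trd[\dist]{s}{t}$, so that $\Gamma^{\C^*}_\lambda(\dist) = \Delta_\lambda(\dist) = \dist$, the last equality by Theorem~\ref{th:uniquefix}. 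Thus $\dist$ is a fixed point of $\Gamma^{\C^*}_\lambda$, and by the definition of $\discr{C^*}$ as the least fixed point, $\discr{C^*} \sqsubseteq \dist$. Combined with the previous paragraph, $\discr{C^*} = \dist$ and the infimum over coupling structures is attained, giving the claimed equality as a minimum.

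The main conceptual point is the bidirectional interplay between the two fixed-point operators; once one sees that $\Delta_\lambda$ is the ``pointwise optimization'' of the $\Gamma^\C_\lambda$'s, both inclusions become Tarski-style arguments, and the only subtle step is invoking attainment of the Kantorovich optimum to produce the witnessing coupling structure $\C^*$.
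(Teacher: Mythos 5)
Your proof is correct and follows essentially the same route as the paper: both directions rest on the pointwise inequality $\Delta_\lambda(d) \sqsubseteq \Gamma^{\C}_\lambda(d)$ (from $\mathcal{K}_d$ being a minimum over couplings) together with Tarski's theorem, and attainment is witnessed by the coupling structure $\C^*$ built from Kantorovich-optimal couplings for $\dist$. The only cosmetic difference is that you instantiate the inequality at $\discr{C}$ to show it is a pre-fixed point of $\Delta_\lambda$, whereas the paper invokes the comparison of least fixed points directly; these are interchangeable.
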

\begin{proof}
We first prove that $\delta_\lambda \sqsubseteq \discr{C}$, for any coupling structure $\C$ for $\M$. By Tarski's fixed point theorem, it suffices to prove that, for any $d \colon S \times S \to [0,1]$, $\Delta_\lambda(d) \sqsubseteq \Gamma^\C_\lambda(d)$. The only nontrivial case is when $s,t \notin A$, which follows by definition of $\mathcal{K}_d$, by noticing that $\trd[d]{s}{t} \leq \trcd[d]{s}{t}$ and that the maximum is order preserving.
It remains to prove that the minimum is attained. To this end, define a coupling structure $\C^*$ as $\C^*(s,t) = \omega_{s,t}$, for $s,t \notin A$, where $\omega_{s,t} \in \coupling{\tau(s)}{\tau(t)}$ is such that $\mathcal{K}_{\dist}(\tau(s),\tau(t)) = \sum_{u,v} \dist(u,v) \cdot \omega_{s,t}(u,v)$. By construction, $\dist  = \Gamma^{\C^*}_\lambda(\dist)$, hence $\discr{C^*} \sqsubseteq \dist$. Since $\C^*$ is a coupling structure for $\M$, by what we have shown above we also have $\dist \sqsubseteq \discr{C^*}$. Therefore, $\dist = \discr{C^*}$.
\end{proof}

\section{Greedy Computation of the Bisimilarity Distance} \label{sec:algorithm}

Inspired by the characterization given in Theorem~\ref{th:coupling-dist}, we propose a procedure to compute the bisimilarity pseudometric that is alternative to those previously described. 

The set of coupling structures for $\M$ can be endowed with the preorder $\ordC$ defined as $\C \ordC \C'$ iff $\discr{C} \sqsubseteq \discr{C'}$. 
Theorem~\ref{th:coupling-dist} suggests to look at all the coupling structures $\C$ for $\M$ in order to find an optimal one, i.e., minimal w.r.t.~${\ordC}$. However, it is clear that the enumeration of all the couplings is unfeasible, therefore it is crucial to provide an efficient search strategy which allows one to find an optimal coupling by exploring only a finite amount of them. Moreover we also need an efficient method for computing the $\lambda$-discrepancy associated with a coupling structure.

\subsection{Computing the \texorpdfstring{$\lambda$}{Lambda}-Discrepancy} \label{sec:reachprob}
In this section we consider the problem of computing the $\lambda$-discrepancy associated with a coupling structure. 

By Tarski's fixed point theorem, $\discr{C}$ corresponds to the least pre-fixed point of $\Gamma^\C_\lambda$, that is $\discr{C} = \bigsqcap \{ d \in [0,1]^{S \times S} \mid \Gamma^\C_\lambda(d) \sqsubseteq d \}$. This allows us to compute the $\lambda$-discrepancy associated with $\C$ as the optimal solution of the following linear program, denoted by $\mathit{Discr}_\lambda(\C)$.
\begin{align*}
\operatorname*{arg\,min}_{d} & \: 
\textstyle \sum_{s,t \in S} d_{s,t} \\
	& d_{s,t} \geq 1 && \text{if $s \not\equiv_A t$} \\
	& d_{s,t} \geq \lbld{s}{t} && \text{if $s \equiv_A t$} \\
	& d_{s,t} \geq \lambda \big( \rated{s}{t}  + (1 - \rated{s}{t}) \cdot \textstyle \sum_{u,v \in S} d_{u,v} \cdot \C(s,t)(u,v) \big) && 
	\text{if $s,t \notin A$}
\end{align*}
$\mathit{Discr}_\lambda(\C)$ has a number of inequalities that is bounded by $2 |S|^2$ and $|S|^2$ unknowns, thus, it can be efficiently solved using the interior-point method. 

\begin{rem} \label{rmk:subsystem}
If one is interested in computing the $\lambda$-discrepancy for a particular pair of states $(s,t)$, the method above can be applied on the least independent set of inequalities containing the variable $d_{s,t}$. Moreover, assuming that for some pairs the values associated to $d$ are known, the set of constraints can be further decreased by substitution.
\qee
\end{rem}

\subsection{Greedy Strategy for Optimal Coupling Structures} 
\label{sec:strategy}

In this section, we propose a \emph{greedy strategy} that moves toward an optimal coupling structure starting from any given one. Then, we provide sufficient and necessary conditions for a coupling structure, to ensure that its associated $\lambda$-discrepancy coincides with $\dist$.

Hereafter we fix a CTMC $\M =(S, A, \tau, \rho, \ell)$ and a coupling structure $\C$ for it. 
The greedy strategy takes a coupling structure and locally updates it at a given pair of states in such a way that it decreases it with respect to $\ordC$. 
For $s,t \notin A$ and $\omega \in \coupling{\tau(s)}{\tau(t)}$, we denote by $\C[(s,t)/\omega]$ the \emph{update} of $\C$ at $(s,t)$ with $\omega$, defined as $\C[(s,t)/\omega](u,v) = \C(u,v)$, for all $(u,v) \neq (s,t)$, and $\C[(s,t)/\omega](s,t) = \omega$; it is worth noting that, by construction, $\C[(s,t)/\omega]$ is a coupling structure of $\M$.

The next lemma gives a sufficient condition for an update to be effective for the strategy.
\begin{lem} \label{lem:betterapprox}
Let $s, t \notin A$ and $\omega \in \coupling{\tau(s)}{\tau(t)}$. Then, for $\mathcal{H} = \C[(s,t)/\omega]$ and any $\lambda \in (0,1)$, if $\Gamma^{\mathcal{H}}_\lambda(\discr{C})(s,t) < \discr{C}(s,t)$ then $\discr{H} \sqsubset \discr{C}$.
\end{lem}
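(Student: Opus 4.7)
The plan is to combine two ingredients: (i) a ``locality'' property of the operator $\Gamma^\C_\lambda$, which implies that $\Gamma^{\mathcal{H}}_\lambda$ and $\Gamma^{\C}_\lambda$ agree outside of the pair $(s,t)$, and (ii) the Tarski-style characterization of $\discr{H}$ as the least pre-fixed point of $\Gamma^{\mathcal{H}}_\lambda$, together with the monotonicity of $\Gamma^{\mathcal{H}}_\lambda$.

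First I would inspect the definition of $\Gamma^{\C}_\lambda$. Since the coupling structure $\C$ is used only through the value $\C(s,t)$ in the formula for $\trcd[d]{s}{t}$, the value of $\Gamma^{\C}_\lambda(d)(u,v)$ at a pair $(u,v)$ depends only on $\C(u,v)$ (and on $d$). Because $\mathcal{H} = \C[(s,t)/\omega]$ differs from $\C$ only at $(s,t)$, this yields the identity $\Gamma^{\mathcal{H}}_\lambda(d)(u,v) = \Gamma^{\C}_\lambda(d)(u,v)$ for every $d \in \mDom$ and every $(u,v)\neq(s,t)$. Instantiating this at $d = \discr{C}$ and using $\Gamma^{\C}_\lambda(\discr{C}) = \discr{C}$, I obtain $\Gamma^{\mathcal{H}}_\lambda(\discr{C})(u,v) = \discr{C}(u,v)$ for all $(u,v)\neq(s,t)$, while by hypothesis $\Gamma^{\mathcal{H}}_\lambda(\discr{C})(s,t) < \discr{C}(s,t)$. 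Hence $\Gamma^{\mathcal{H}}_\lambda(\discr{C}) \sqsubseteq \discr{C}$, i.e., $\discr{C}$ is a pre-fixed point of $\Gamma^{\mathcal{H}}_\lambda$.

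Next, because $\Gamma^{\mathcal{H}}_\lambda$ is monotone on the complete lattice $(\mDom,\sqsubseteq)$, Tarski's fixed point theorem gives that its least fixed point $\discr{H}$ lies below every pre-fixed point, so $\discr{H} \sqsubseteq \discr{C}$. It remains to upgrade this to strict inequality with respect to $\sqsubset$. Applying monotonicity of $\Gamma^{\mathcal{H}}_\lambda$ to $\discr{H}\sqsubseteq \discr{C}$ and using $\Gamma^{\mathcal{H}}_\lambda(\discr{H})=\discr{H}$ yields
\begin{equation*}
\discr{H}(s,t) \;=\; \Gamma^{\mathcal{H}}_\lambda(\discr{H})(s,t) \;\leq\; \Gamma^{\mathcal{H}}_\lambda(\discr{C})(s,t) \;<\; \discr{C}(s,t),
\end{equation*}
which witnesses $\discr{H}\neq \discr{C}$ and therefore $\discr{H}\sqsubset \discr{C}$.

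I do not expect any step to pose real difficulty: the only subtle point is verifying the locality identity between $\Gamma^{\mathcal{H}}_\lambda$ and $\Gamma^{\C}_\lambda$ at pairs $(u,v)\neq(s,t)$, which, although immediate from the definition, is precisely what makes the greedy update ``local'' and is the crux that lets a single strict decrease at $(s,t)$ propagate into strict inequality $\sqsubset$ between the associated $\lambda$-discrepancies.
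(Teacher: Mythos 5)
Your proposal is correct and follows essentially the same route as the paper: establish that $\Gamma^{\mathcal{H}}_\lambda$ and $\Gamma^{\C}_\lambda$ agree on $\discr{C}$ at every pair other than $(s,t)$ (the paper does this by explicit case analysis on $u \not\equiv_A v$, $u,v \in A$, and $u,v \notin A$ with $(u,v)\neq(s,t)$), conclude that $\discr{C}$ is a strict pre-fixed point of $\Gamma^{\mathcal{H}}_\lambda$, and invoke Tarski's fixed point theorem. Your explicit chain $\discr{H}(s,t) \leq \Gamma^{\mathcal{H}}_\lambda(\discr{C})(s,t) < \discr{C}(s,t)$ spells out the strictness step that the paper leaves implicit, which is a welcome refinement but not a different argument.
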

\begin{proof} 
It suffices to show that $\Gamma^{\mathcal{H}}_\lambda (\discr{C}) \sqsubset \discr{C}$, i.e., that $\discr{C}$ is a strict post-fixed point of $\Gamma^\mathcal{H}_\lambda$. Then, the thesis follows by Tarski's fixed point theorem. 

Let $u,v \in S$. If $u \not\equiv_A v$, then $\Gamma^{\mathcal{H}}_{\lambda}(\discr{C})(u,v) = 1 = 
\Gamma^\C_\lambda(\discr{C})(u,v) =\discr{C}(u,v)$. If $u,v \in A$, then $\Gamma^{\mathcal{H}}_{\lambda}(\discr{C})(u,v) = \lbld{u}{v} = \Gamma^\C_\lambda(\discr{C})(u,v) =\discr{C}(u,v)$. If $u,v \notin A$ and $(u,v) \neq (s,t)$, by definition of $\mathcal{H}$, we have that $\C(u,v) = \mathcal{H}(u,v)$, hence
$\Gamma^\C_{\lambda}(\discr{C})(u,v) = \Gamma^{\mathcal{H}}_{\lambda}(\discr{C})(u,v)$. The remaining case, i.e., $(u,v) = (s,t)$, holds by hypothesis. This proves $\Gamma^{\mathcal{H}}_\lambda (\discr{C}) \sqsubset \discr{C}$.
\end{proof}

Lemma~\ref{lem:betterapprox} states that $\mathcal{C}$ can be improved w.r.t. $\ordC$ by updating it at $(s,t)$, if $s,t \notin A$ and there exists a coupling $\omega \in \coupling{\tau(s)}{\tau(t)}$ such that the following holds
\begin{equation*}
  \textstyle
  \sum_{u,v \in S} \discr{C}(u,v) \cdot \omega(u,v) <
  \sum_{u,v \in S} \discr{C}(u,v) \cdot \C(s,t)(u,v) \,.
\end{equation*}
A coupling that enjoys the above condition is $\omega \in TP(\discr{C}, \tau(s), \tau(t))$ where, for arbitrary $\mu,\nu \in \Dist(S)$ and $c \colon S \times S \to [0,1]$
\begin{equation}
\begin{aligned}
TP(c, \mu, \nu) = \operatorname*{arg\,min}_{\omega} & \: 
\textstyle \sum_{s,t \in S} c(u,v) \cdot \omega_{u,v} \\[-0.5ex]
	&\textstyle \sum_{v} \omega_{u,v} = \mu(u) && \forall u \in S \\
	&\textstyle \sum_{u} \omega_{u,v} = \nu(v) && \forall v \in S \\
	& \omega_{u,v} \geq 0 && \forall u,v \in S \, .
\end{aligned}
\label{eq:transportationproblem}
\end{equation}
The above problem is usually referred to as the (homogeneous) \emph{transportation problem} with $\mu$ and $\nu$ as left and right marginals, respectively, and transportation costs $c$. This problem has been extensively studied and comes with (several) efficient polynomial algorithmic solutions~\cite{Dantzig51,FordF56}.

This gives us an efficient solution to update any coupling structure, that, together with Lemma~\ref{lem:betterapprox} represents a strategy for moving toward $\dist$ by successive improvements on the coupling structures.

Now we proceed giving a sufficient and necessary condition for termination.
\begin{lem} \label{lem:bettercoupling}
If $\discr{C} \neq \dist$, then there exist $s,t \notin A$ and a coupling structure $\mathcal{H} = \mathcal{C}[(s,t)/\omega]$ for $\mathcal{M}$ such that $\Gamma^{\mathcal{H}}_{\lambda}(\discr{C})(s,t) < \discr{C}(s,t)$.
\end{lem}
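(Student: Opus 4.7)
The plan is to exploit the uniqueness of the fixed point of $\Delta_\lambda$ (Theorem~\ref{th:uniquefix}) to locate a pair $(s,t)$ at which the ``global'' operator $\Delta_\lambda$ strictly improves the ``local'' operator $\Gamma^\C_\lambda$ on input $\discr{C}$, and then realise that strict improvement by choosing $\omega$ to be an optimal Kantorovich coupling at $(s,t)$.

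First, recall from the proof of Theorem~\ref{th:coupling-dist} that $\dist \sqsubseteq \discr{C'}$ for every coupling structure $\C'$, and that $\discr{C}$ is a fixed point of $\Gamma^\C_\lambda$. Since $\trd[d]{s}{t} \leq \trcd[d]{s}{t}$ for every $d$ (the Kantorovich value is a minimum over couplings, hence no larger than the value at the particular coupling $\C(s,t)$), one has the pointwise inequality $\Delta_\lambda(d) \sqsubseteq \Gamma^\C_\lambda(d)$. Specializing to $d = \discr{C}$ yields $\Delta_\lambda(\discr{C}) \sqsubseteq \Gamma^\C_\lambda(\discr{C}) = \discr{C}$. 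Now if we had $\Delta_\lambda(\discr{C}) = \discr{C}$, then $\discr{C}$ would be a fixed point of $\Delta_\lambda$, and by Theorem~\ref{th:uniquefix} this forces $\discr{C} = \dist$, contradicting the hypothesis. Hence there exists some pair $(s,t) \in S \times S$ with $\Delta_\lambda(\discr{C})(s,t) < \discr{C}(s,t)$.

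Next, observe that such an $(s,t)$ must satisfy $s,t \notin A$: indeed, if $s \not\equiv_A t$ or $s,t \in A$, the definitions of $\Delta_\lambda$ and $\Gamma^\C_\lambda$ agree on the value at $(s,t)$ (they are $1$ or $\lbld{s}{t}$ respectively, independent of the argument), which would contradict the strict inequality just obtained. With $s,t \notin A$ in hand, pick any $\omega \in TP(\discr{C}, \tau(s), \tau(t))$; by definition of the transportation problem and equation~\eqref{eq:primalK}, such $\omega$ is a coupling of $\tau(s)$ and $\tau(t)$ that realises the Kantorovich minimum, i.e.
\begin{equation*}
  \textstyle \sum_{u,v \in S} \discr{C}(u,v)\cdot \omega(u,v) \;=\; \mathcal{K}_{\discr{C}}(\tau(s),\tau(t)) \,.
\end{equation*}

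Finally, set $\mathcal{H} = \C[(s,t)/\omega]$, which is again a coupling structure for $\M$ by construction. At the pair $(s,t)$ we have $\trcd[\discr{C}][\mathcal{H}]{s}{t} = \rated{s}{t} + (1-\rated{s}{t})\cdot \mathcal{K}_{\discr{C}}(\tau(s),\tau(t)) = \trd[\discr{C}]{s}{t}$, so
\begin{equation*}
  \Gamma^{\mathcal{H}}_\lambda(\discr{C})(s,t)
  \;=\; \max\{\lbld{s}{t},\, \lambda \cdot \trd[\discr{C}]{s}{t}\}
  \;=\; \Delta_\lambda(\discr{C})(s,t)
  \;<\; \discr{C}(s,t) \,,
\end{equation*}
which is exactly the conclusion sought. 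The only step that requires mild care is the passage from ``$\discr{C} \neq \dist$'' to ``$\Delta_\lambda(\discr{C}) \neq \discr{C}$''; this is where uniqueness of the fixed point of $\Delta_\lambda$ (and hence Banach's theorem via Lemma~\ref{lem:nonexpansive}) is genuinely used. Everything else is routine bookkeeping and the choice of an optimal transport plan.
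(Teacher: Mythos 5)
Your proof is correct and follows essentially the same route as the paper's: both hinge on the uniqueness of the fixed point of $\Delta_\lambda$ (Theorem~\ref{th:uniquefix}) together with the observation that $\Delta_\lambda(\discr{C}) \sqsubseteq \Gamma^{\C}_\lambda(\discr{C}) = \discr{C}$, the paper phrasing this as a two-line contraposition while you argue directly via a small contradiction. Your version usefully spells out the details the paper leaves implicit, namely that the improving pair must lie outside $A$ and that the strict inequality is realised by an optimal transport plan $\omega \in TP(\discr{C},\tau(s),\tau(t))$.
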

\begin{proof}
We proceed by contraposition. If for all $s,t \notin A$ and $\omega \in \coupling{\tau(s)}{\tau(t)}$, $\Gamma^\mathcal{H}_\lambda(\discr{C})(s,t) \geq \discr{C}(s,t)$, then $\discr{C} = \Delta_\lambda(\discr{C})$. Since, by Theorem~\ref{th:uniquefix}, $\Delta_\lambda$ has a unique fixed point, $\discr{C} = \dist$.
\end{proof}
The above result ensures that, unless $\mathcal{C}$ is optimal w.r.t\, 
$\ordC$, the hypothesis of Lemma~\ref{lem:betterapprox} is 
satisfied, so that, we can further improve $\mathcal{C}$ as aforesaid.

The next statement proves that this search strategy is correct.
\begin{thm} \label{th:correctness}
$\dist = \discr{C}$ iff there is no coupling $\mathcal{H}$ for
$\mathcal{M}$ such that $\Gamma^{\mathcal{H}}_{\lambda}(\discr{C}) \sqsubset \discr{C}$.
\end{thm}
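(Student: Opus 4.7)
The plan is to prove the two directions of the biconditional separately, invoking the previously established Theorem~\ref{th:coupling-dist} and Lemmas~\ref{lem:betterapprox} and~\ref{lem:bettercoupling}.

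For the ($\Rightarrow$) direction, I would proceed by contraposition. Suppose there exists a coupling structure $\mathcal{H}$ with $\Gamma^{\mathcal{H}}_\lambda(\discr{C}) \sqsubset \discr{C}$. Then $\discr{C}$ is a post-fixed point of the monotone operator $\Gamma^{\mathcal{H}}_\lambda$, so by Tarski's fixed point theorem $\discr{H} \sqsubseteq \discr{C}$. Moreover, the inequality is strict: if $\discr{H} = \discr{C}$ then $\discr{C}$ would be a fixed point of $\Gamma^{\mathcal{H}}_\lambda$, contradicting $\Gamma^{\mathcal{H}}_\lambda(\discr{C}) \sqsubset \discr{C}$. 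Hence $\discr{H} \sqsubset \discr{C}$. By Theorem~\ref{th:coupling-dist}, $\dist \sqsubseteq \discr{H}$, and combining the inequalities yields $\dist \sqsubset \discr{C}$, i.e., $\dist \neq \discr{C}$.

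For the ($\Leftarrow$) direction, I again argue by contraposition: assume $\dist \neq \discr{C}$, and produce a coupling structure $\mathcal{H}$ witnessing $\Gamma^{\mathcal{H}}_\lambda(\discr{C}) \sqsubset \discr{C}$. Lemma~\ref{lem:bettercoupling} directly supplies states $s,t \notin A$ and a coupling $\omega \in \coupling{\tau(s)}{\tau(t)}$ such that the local update $\mathcal{H} = \mathcal{C}[(s,t)/\omega]$ strictly decreases the value at $(s,t)$, that is $\Gamma^{\mathcal{H}}_\lambda(\discr{C})(s,t) < \discr{C}(s,t)$. To promote this strict pointwise decrease into the global inequality $\Gamma^{\mathcal{H}}_\lambda(\discr{C}) \sqsubset \discr{C}$, I invoke the argument already used in the proof of Lemma~\ref{lem:betterapprox}: since $\mathcal{H}$ and $\mathcal{C}$ agree off the single pair $(s,t)$, for every $(u,v) \neq (s,t)$ the values $\Gamma^{\mathcal{H}}_\lambda(\discr{C})(u,v)$ and $\Gamma^{\mathcal{C}}_\lambda(\discr{C})(u,v) = \discr{C}(u,v)$ coincide. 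Together with the strict inequality at $(s,t)$, this gives $\Gamma^{\mathcal{H}}_\lambda(\discr{C}) \sqsubset \discr{C}$, completing the contrapositive.

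The two directions together establish the theorem. No step appears to be a serious obstacle: the real work has been carried out in Theorem~\ref{th:coupling-dist} (which pins $\dist$ as the pointwise minimum of discrepancies), in Lemma~\ref{lem:bettercoupling} (which rules out suboptimal fixed points of $\Delta_\lambda$ via Theorem~\ref{th:uniquefix}), and in the local-update argument from Lemma~\ref{lem:betterapprox}. The only subtle point is ensuring the strictness is propagated correctly in both directions, which is handled by the observation that a fixed point of $\Gamma^{\mathcal{H}}_\lambda$ cannot strictly dominate its image, and by the locality of the update $\mathcal{C}[(s,t)/\omega]$.
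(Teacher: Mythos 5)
Your proof is correct and follows essentially the same route as the paper: both directions rest on Lemma~\ref{lem:bettercoupling} combined with the locality argument from Lemma~\ref{lem:betterapprox} for one implication, and on Tarski's fixed point theorem plus Theorem~\ref{th:coupling-dist} for the other. The only difference is that you spell out explicitly why $\gamma_\lambda^{\mathcal{H}} \sqsubset \gamma_\lambda^{\mathcal{C}}$ is strict, a step the paper leaves implicit.
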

\begin{proof}
We prove: $\dist \neq \discr{C}$ iff 
there exists $\mathcal{H}$ such that $\Gamma^{\mathcal{H}}_{\lambda}(\discr{C}) \sqsubset \discr{C}$.
($\Rightarrow$) Assume $\dist \neq \discr{C}$. 
By Lemma~\ref{lem:bettercoupling}, there exist a pair of states $s,t \in S$ and a coupling $\omega \in \coupling{\tau(s)}{\tau(t)}$ such that $\lambda \cdot \sum_{u,v \in S} \discr{C}(u,v) \cdot \omega(u,v) < \discr{C}(s,t)$. As in the proof of Lemma~\ref{lem:betterapprox}, we have that $\mathcal{H} = \mathcal{C}[(s,t)/\omega]$ 
satisfies $\Gamma^{\mathcal{H}}_\lambda (\discr{C}) \sqsubset \discr{C}$.
($\Leftarrow$) Let $\mathcal{H}$ be such that 
$\Gamma^{\mathcal{H}}_{\lambda}(\discr{C}) \sqsubset \discr{C}$. 
By Tarski's fixed point theorem $\discr{D} \sqsubset \discr{C}$. 
By Theorem~\ref{th:coupling-dist}, $\dist \sqsubseteq \discr{D} \sqsubset \discr{C}$.
\end{proof}

\begin{rem} \label{rmk:vertexes}
Note that, in general there could be an infinite number of coupling structures for a given CTMC. However, for each fixed $d \in \mDom$, the linear function mapping $\omega$ to $\sum_{u,v \in S} d(u,v) \cdot \omega(u,v)$ achieves its minimum at some vertex in the transportation polytope $\coupling{\tau(s)}{\tau(t)}$. 
Since the number of such vertices is finite, the termination of the search strategy is ensured by updating the coupling structure using optimal vertices. This does not introduce further complications in the algorithm since there are methods for solving the transportation problem (e.g., Dantzig's primal simplex method~\cite{Dantzig51}) which provide optimal transportation schedules that are vertices. \qee
\end{rem}

\section{The On-the-Fly Algorithm} \label{sec:onthefly}
\renewcommand{\discr}[2][\lambda]{\gamma_{#1}^{#2}}
\newcommand{\demanded}[2][\mathit{Exact},\C]{\mathcal{R}_{#2}(#1)}
In this section we describe an on-the-fly technique for computing the bisimilarity distance $\dist$ fully exploiting the greedy strategy of Section~\ref{sec:strategy}.

Let $Q \subseteq S \times S$ and consider the problem of computing $\dist(s,t)$ for all $(s,t) \in Q$. Recall that the strategy proposed in Section~\ref{sec:strategy} consists in a traversal $\C_0 \triangleright_\lambda \C_1 \triangleright_\lambda \cdots \triangleright_\lambda  \C_n$ of the set of coupling structures for $\M$ that starts from an arbitrary coupling structure $\C_0$ and leads to an optimal one $\C_n$. 
We observe that, for any $i < n$
\begin{enumerate}
  \item the improvement of each coupling structure $\C_i$ is obtained by a \emph{local update} at some pair of states $u,v \notin A$, namely $\C_{i+1} = \C_{i}[(u,v)/\omega]$ for some $\omega \in TP(\discr{\C_i},\tau(u),\tau(v))$;
  \item the pair $(u,v)$ is chosen according to an optimality check that is performed \emph{locally} among the couplings in $\coupling{\tau(u)}{\tau(v)}$, i.e., $\C_i(u,v) \notin TP(\discr{\C_i},\tau(u),\tau(v))$;
  \item whenever a coupling structure $\C_i$ is considered, its associated $\lambda$-discrepancy $\discr{\C_i}$ can be computed by solving the linear program $\mathit{Discr}_\lambda(\C_i)$ described in Section~\ref{sec:reachprob}.
\end{enumerate}
Among the observations above, only the last one requires to look at the coupling structure $\C_i$.
However, as noticed in Remark~\ref{rmk:subsystem}, the value $\discr{\C_i}(s,t)$ can be computed without considering the entire set of constraints of $\mathit{Discr}_\lambda(\C_i)$, but only the least independent set of inequalities that contains the variable $d_{s,t}$. 
Moreover, provided that for some pairs of states $E \subseteq S \times S$ the value of the distance is known, the linear program $\mathit{Discr}_\lambda(\C_i)$ can be further reduced by substituting the occurrences of the unknown $d_{u,v}$ by the constant $\dist(u,v)$, for each $(u,v) \in E$.
This suggests that we do not need to store the entire coupling structures, but they can be constructed \emph{on-the-fly} during the calculation.
Specifically, the couplings that are demanded to compute $\discr{\C_i}(s,t)$ are only those $\C_i(u,v)$ such that \mbox{$(s,t) \leadsto_{\C_i,E}^* (u,v)$}, where $\leadsto_{\C_i,E}^*$ is the reflexive and transitive closure of ${\leadsto_{\C_i,E}} \subseteq S^2 \times S^2$, defined by
\begin{equation*}
(s',t') \leadsto_{\C_i,E} (u',v') \quad \text{ iff } \quad \C_i(s',t')(u',v') > 0 \text{ and } (u',v') \notin E \,.
\end{equation*}

\begin{algorithm}[t]
    \algsetup{linenodelimiter=.}
    \caption{On-the-Fly Bisimilarity Pseudometric}
    \begin{algorithmic}[1]  
    \REQUIRE CTMC $\M = (S, A, \tau, \rho, \ell)$;
    discount factor $\lambda \in (0,1)$; query $Q \subseteq S \times S$.
    \STATE $\mathcal{C} \gets$ empty; $Dom_\C \gets \emptyset$; $d \gets$ empty; \label{line:init1}
    	\COMMENT{initialize data structures}
    \STATE $Visited \gets \emptyset$; $Exact \gets \emptyset$; 
    	$ToCompute \gets Q$ \label{line:init2}
    \WHILE{$ToCompute\neq \emptyset$} \label{mainloop1}
    \STATE pick $(s,t) \in ToCompute$ \label{pickone}
   
	
    \IF{$s \not\equiv_A t$} \label{trivial1}
    \STATE $d(s,t) \gets 1$; $Exact \gets Exact \cup \{(s,t)\}$; $Visited \gets Visited \cup \{(s,t)\}$
    \ELSIF{$s = t$}
    \STATE $d(s,t) \gets 0$; $Exact \gets Exact \cup \{(s,t)\}$; $Visited \gets Visited \cup \{(s,t)\}$
    	 \label{trivial2}
    \ELSIF{$s,t \in A$}
    \STATE $d(s,t) \gets \lbld{s}{t}$; $Exact \gets Exact \cup \{(s,t)\}$; $Visited \gets Visited \cup \{(s,t)\}$
    	 \label{trivial3}	 
    \ELSE[if $(s,t)$ is nontrivial] 
    \IF[if $(s,t)$ has not been encountered so far]{$(s,t) \notin Visited$}
    	\STATE pick $\omega \in \coupling{\tau(s)}{\tau(t)}$
		\COMMENT{guess a coupling}
	\STATE $\mathit{SetPair}(\mathcal{M}, (s,t), \omega)$ 
		\COMMENT{update the current coupling structure} \label{buildC}
    \ENDIF    
    \STATE $\mathit{Discrepancy}(\lambda,(s,t))$
    	\COMMENT{update $d$ as the $\lambda$-discrepancy for $\mathcal{C}$} \label{discrC}
     \WHILE{$\exists (u,v) \in \demanded{s,t} \text{ such that  } \mathcal{C}(u,v) \notin \mathit{TP}(d,\tau(u), \tau(v))$} \label{improve1}
    	\STATE $\omega \in \mathit{TP}(d,\tau(u), \tau(v))$
		\COMMENT{pick an optimal coupling for $s,t$ w.r.t.\ $d$}
    	\STATE $\mathit{SetPair}(\mathcal{M}, (u,v), \omega)$
		\COMMENT{improve the current coupling structure}
	\STATE $\mathit{Discrepancy}(\lambda,(s,t))$ 
		\COMMENT{update $d$ as the $\lambda$-discrepancy for $\mathcal{C}$}
    \ENDWHILE \label{improve2}
    \STATE $Exact \gets Exact \cup \demanded{s,t}$ \label{addnewexact}
    	\COMMENT{add new exact distances}
    \STATE remove from $\C$ all the couplings associated with a pair in $Exact$   \label{flushC}
    \ENDIF 
    \STATE $ToCompute \gets ToCompute  \setminus Exact$ 
    	\COMMENT{remove exactly computed pairs} \label{updatequery}
    \ENDWHILE \label{mainloop2}
    \RETURN $d{\restriction}_{Q}$
    \COMMENT{return the distance restricted to all pairs in $Q$} \label{return}
    \end{algorithmic}
    \label{alg:genericBisimPseudoSimm}
\end{algorithm}

The computation of the bisimilarity pseudometric is implemented by 
Algorithm~\ref{alg:genericBisimPseudoSimm}. It takes as input a finite CTMC 
$\M = (S, A, \tau, \rho, \ell)$, a discount factor $\lambda \in (0,1)$, and a query set $Q \subseteq S \times S$.
We assume the following global variables to store: 
\begin{itemize}
  \item $\mathcal{C}$: the current (partial) coupling structure;
  \item $d$: the $\lambda$-discrepancy associated with $\mathcal{C}$;
  \item $ToCompute$: the pairs of states for which the distance has to be computed; 
  \item $Exact$: the set pairs of states $(s,t)$ such that $d(s,t) = \dist(s,t)$, i.e., those pairs which do not need to be further improved\footnote{Actually, the set $Exact$ contains those pairs such that $|d(s,t) - \dist(s,t)| \leq \epsilon$ where $\epsilon$ corresponds to the precision of the machine. In our implementation $\epsilon = 10^{-9}$.};
  \item $Visited$: the set of pairs of states that have been visited so far.
\end{itemize}
Moreover, $\demanded{s,t}$ will denote the set $\{ (u,v) \mid (s,t) \leadsto_{\C,Exact}^* (u,v) \}$.

At the beginning  (line~\ref{line:init1}--\ref{line:init2}) both the coupling structure $\mathcal{C}$ and the 
discrepancy $d$ are empty, there are no visited states, no exact computed distances, and the pairs to be computed are those in the input query.

While there are still pairs left to be computed (line~\ref{mainloop1}), we pick one (line~\ref{pickone}), say $(s,t)$. According to the definition of $\dist$, if $s \not\equiv_A t$ then $\dist(s,t)=1$; if $s = t$ then $\dist(s,t) = 0$ and if $s,t \in A$ then $\dist(s,t) = \lbld{s}{t}$, so that, $d(s,t)$ is set accordingly, and $(s,t)$ is added to $\mathit{Exact}$ (lines~\ref{trivial1}--\ref{trivial3}). Otherwise, if $(s,t)$ was not previously visited, a coupling $\omega \in \coupling{\tau(s)}{\tau(t)}$ is guessed, and the routine $\mathit{SetPair}$ updates the coupling structure $\C$ at $(s,t)$ with $\omega$ (line~\ref{buildC}), then the routine $\mathit{Discrepancy}$ updates $d$ with the $\lambda$-discrepancy associated with $\C$ (line~\ref{discrC}). According to the greedy strategy, $\C$ is successively improved and $d$ is consequently updated, until no further improvements are possible (lines~\ref{improve1}--\ref{improve2}). Each improvement is obtained by replacing a sub-optimal coupling $C(u,v)$, for some $(u,v) \in \demanded{s,t}$, by one taken from $TP(d, \tau(u), \tau(v))$ (line~\ref{improve1}).
Note that, each improvement actually affects the current value of $d(s,t)$, since the update is performed on a pair in $\demanded{s,t}$. It is worth to note that $\C$ and $\mathit{Exact}$ are constantly updated, hence $\demanded{s,t}$ may differ from one iteration to another.

When line~\ref{addnewexact} is reached, for each $(u,v) \in \demanded{s,t}$, we are guaranteed that $d(u,v) = \dist(s,t)$, therefore $\demanded{s,t}$ is added to $\mathit{Exact}$ and, for these pairs, $d$ will no longer be updated. At this point (line~\ref{flushC}), the couplings associated with the pairs in $\mathit{Exact}$ can be removed from $\C$.
In line~\ref{updatequery}, the exact pairs computed so far are removed from $ToCompute$. 
Finally, if no more pairs need be considered, the exact distance on $Q$ is returned (line~\ref{return}).

Algorithm~\ref{alg:genericBisimPseudoSimm} calls the subroutines $\mathit{SetPair}$ and
$\mathit{Discrepancy}$. The former is used to construct and update the coupling structure $\mathcal{C}$,
the latter to update the current over-approxi\-ma\-tion $d$ during the computation.
Next, we explain how they work. 

\begin{algorithm}[t]
    \algsetup{linenodelimiter=.}
    \caption{$\mathit{SetPair}(\mathcal{M}, (s,t), \omega)$} 
    \begin{algorithmic}[1]  
    \REQUIRE  CTMC $\M = (S, A, \tau, \rho, \ell)$; $s,t \in S$; $\omega \in \coupling{\tau(s)}{\tau(t)}$
    \STATE $\mathcal{C}(s,t) \gets \omega$
    	\COMMENT{update the coupling at $(s,t)$ with $\omega$}  \label{setmatching1}
    \STATE $Visited \gets Visited \cup \{(s,t)\}$ \COMMENT{set $(s,t)$ as visited}  \label{setmatching2}
    \FORALL[for all demanded pairs]{$(u,v) \notin \mathit{Visited}$ such that 
    $(s,t) \leadsto_{\C,Exact} (u,v)$} \label{recbuild1}
    \STATE $Visited \gets Visited \cup \{(u,v)\}$
    \STATE \textbf{if} $u = v$ \textbf{then} 
    	$d(u,v) \gets 0$; $Exact \gets Exact \cup \{(u,v)\}$;  \label{settrivial1}
    \STATE \textbf{if} $u \not\equiv_A v$ \textbf{then} 
    	$d(u,v) \gets 1$; $Exact \gets Exact \cup \{(u,v)\}$;  \label{settrivial2}
    \STATE \textbf{if} $u,v \in A$ \textbf{then} 
    	$d(u,v) \gets \lbld{u}{v}$; $Exact \gets Exact \cup \{(u,v)\}$;  \label{settrivial3}
    \STATE // propagate the construction
    \IF{$(u,v) \notin Exact$}
    \STATE pick $\omega' \in \coupling{\tau(u)}{\tau(v)}$ 
    	\COMMENT{guess a matching}
    \STATE $\mathit{SetPair}(\mathcal{M}, (u,v), \omega')$
    \ENDIF
    \ENDFOR \label{recbuild2}
    \end{algorithmic}
    \label{alg:SetPair}
\end{algorithm}

$\mathit{SetPair}$ (Algorithm~\ref{alg:SetPair}) takes as input a CTMC $\M = (S, A, \tau, \rho, \ell)$, a pair of states $s,t \in S$, and a coupling $\omega \in \coupling{\tau(s)}{\tau(t)}$. In lines~\ref{setmatching1}--\ref{setmatching2}, the coupling structure $\C$ is set to $\omega$ at $(s,t)$, then $(s,t)$ is added to $\mathit{Visited}$. The on-the-fly construction of the coupling structure is recursively propagated to the demanded successor pairs of $(s,t)$ according to the information accumulated so far. 
During this construction, if some states with trivial distances are encountered, $d$ and
$Exact$ are updated accordingly (lines~\ref{settrivial1}--\ref{settrivial3}).

\begin{algorithm}[t]
    \algsetup{linenodelimiter=.}
    \caption{$\mathit{Discrepancy}(\lambda,(s,t))$}
    \begin{algorithmic}[1]
    \REQUIRE discount factor $\lambda \in (0,1)$; $s,t \in S \setminus A$
    
    \STATE Let LP be the linear program obtained from $\mathit{Discr}_\lambda(\C)$ by keeping only the    inequalities associated with pairs in $\demanded{s,t}$ and replacing the unknown $d_{u,v}$ by the constant $d(u,v)$, for all $(u,v) \in \textit{Exact}$. \label{smallestLP}
    \STATE $d^* \gets$ optimal solution of LP
    \FORALL[update distances]{$(u,v) \in \demanded{s,t}$} \label{discrupdateloop1}
    \STATE $d(u,v) \gets d^{*}_{u,v}$ 
    \IF{$d(u,v) = 0$ or $d(u,v) = \lbld{u}{v}$}
    \STATE $\textit{Exact} \gets \textit{Exact} \cup \{(u,v)\}$
    \ENDIF \label{discrupdateloop2}
    \ENDFOR 
    \end{algorithmic}
    \label{alg:discrepancy}
\end{algorithm}

$\mathit{Discrepancy}$ (Algorithm~\ref{alg:discrepancy}) takes as input
a discount factor $\lambda \in (0,1)$ and a pair of states $s,t \notin A$. It constructs the least linear program obtained from $\mathit{Discr}_\lambda(\C)$, that can compute $\discr{\C}(s,t)$ using the information accumulated so far (line~\ref{smallestLP}). In lines~\ref{discrupdateloop1}--\ref{discrupdateloop2} the current $\lambda$-discrepancy is updated accordingly; and those pairs $(u,v) \in \demanded{s,t}$ for which the current $\lambda$-discrepancy coincide with the distance are added to $\mathit{Exact}$.

\begin{figure}[t]
\centering
\begin{tikzpicture}[
  label1/.style={circle, draw=red!50,fill=red!20, thin, circle split, inner sep=0.6mm},
  label2/.style={circle,draw=blue!50,fill=blue!20, thin, circle split, inner sep=0.6mm},
  label3/.style={circle,draw=green!50!black,fill=green!80!blue, thin, circle split, inner sep=0.6mm}]
 
\draw 
  (0,0) node[label1] (s1) {$s_1$ \nodepart{lower} $15$}
  ($(s1) +(right:2.5)$) node[label2] (s2) {$s_2$ \nodepart{lower} $9$}
  ($(s1) +(right:2.5)+(down:3)$) node[label1] (s3) {$s_3$ \nodepart{lower} $15$}
  ($(s1) +(down:3)$) node[label3] (s4) {$s_4$  \nodepart{lower} $9$}
  ;

\path[-latex]
  (s1) edge[bend left=20]  node[above] {$\frac{5}{7}$} (s2)
  (s1) edge node[left] {$\frac{2}{7}$} (s4)%
  (s2) edge[bend left=20] node[below] {$\frac{1}{4}$} (s1)
  (s2) edge[loop above] node[pos=0.8, right] {$\frac{3}{4}$} (s2) %
  (s3) edge[bend left=20] node[below] {$\frac{1}{2}$} (s4)
  (s3) edge node[right] {$\frac{1}{2}$} (s2) %
  (s4) edge node[pos=0.3, above] {$\frac{1}{9}$} (s2)
  (s4) edge[bend left=20] node[pos=0.6, above] {$\frac{4}{9}$} (s3)
  (s4) edge[loop below] node[pos=0.8, left] {$\frac{4}{9}$} (s4)%
  ;
%
\tikzset{ 
table/.style={
  matrix of math nodes,
  row sep=-\pgflinewidth,
  column sep=-\pgflinewidth,
  nodes={rectangle, draw, text width=2.7ex, text height=2ex, align=center},
  text depth=0.8ex,
  text height=2ex,
  nodes in empty cells,
  row 1/.style={draw=white}, column 1/.style={draw=white} }
}

\def\namesep{5pt}

\draw ($(s2)+(right:1.5cm)+(down:0.2cm)$) node[matrix, table, anchor=west] (c014) {
   & s_2 & s_3 & s_4 \\
  s_2 & & \frac{4}{9} & \frac{17}{63} \\
  s_4 & \frac{1}{9} &  & \frac{11}{63} \\
};
\draw ($(c014-2-2.north west) $) 
node[anchor=south east, inner sep=1pt] {$\omega_{1,4}\colon$};

\draw ($(c014.east)+(right:0.4cm)$) node[matrix, table, anchor=west] (c012) {
   & s_1 & s_2 \\
  s_2 & & \frac{5}{7} \\
  s_4 & \frac{1}{4} & \frac{1}{28} \\
};
\draw ($(c012-2-2.north west)$) 
node[anchor=south east, inner sep=1pt] {$\omega_{1,2}\colon$};

\draw ($(c012.south)$) node[matrix, table, anchor=north] (c023) {
   & s_2 & s_4 \\
  s_1 & \frac{1}{4} & \\
  s_2 & \frac{1}{4} & \frac{1}{2} \\
};
\draw ($(c023-2-2.north west)$) 
node[anchor=south east, inner sep=1pt] {$\omega_{2,3}\colon$};

\draw ($(c014.south)$) node[matrix, table, anchor=north] (c024) {
   & s_2 & s_3 & s_4 \\
  s_1 & \frac{1}{9} & & \frac{5}{36} \\
  s_2 &  & \frac{4}{9} & \frac{11}{36} \\
};
\draw ($(c024-2-2.north west)$) 
node[anchor=south east, inner sep=1pt] {$\omega_{2,4}\colon$};

\path[thick] 
($(c014.north west)+(left:0.5)$) edge node[fill=white] {$\C_0$} ($(c012.north east)$)
($(c024.south west)+(left:0.5)$) edge ($(c023.south east)$)
($(c014.north west)+(left:0.5)$) edge ($(c024.south west)+(left:0.5)$)
($(c012.north east)$) edge ($(c023.south east)$)
;


\draw ($(c012.east)+(right:1cm)$) node[matrix, table, anchor=west] (c114) {
   & s_2 & s_3 & s_4 \\
  s_2 & \frac{1}{9} & \frac{4}{9} & \frac{10}{63} \\
  s_4 & &  & \frac{2}{7} \\
};
\draw ($(c114-2-2.north west)$) 
node[anchor=south east, inner sep=1pt] {$\omega'_{1,4}\colon$};

\path[thick] 
($(c114.north west)+(left:0.5)$) edge node[fill=white] {$\C_1$} ($(c114.north east)$)
($(c114.south west)+(left:0.5)$) edge ($(c114.south east)$)
($(c114.north west)+(left:0.5)$) edge ($(c114.south west)+(left:0.5)$)
($(c114.north east)$) edge ($(c114.south east)$)
;

\newcommand{\lblbox}[1]{\tikz {\node[fill=#1] {};} }

\draw ($(c114.south) +(down:1ex)+(left:1ex)$) node[anchor=north, matrix, matrix of math nodes] {
  d_L(\lblbox{red!50}, \lblbox{blue!50}) = \frac{1}{2} \\
  d_L(\lblbox{red!50}, \lblbox{green!80!blue}) = \frac{1}{6} \\
  d_L(\lblbox{blue!50}, \lblbox{green!80!blue}) = \frac{2}{3} \\
};

\end{tikzpicture}

\caption{Execution trace for the computation of $\dist[\frac{1}{2}](1,4)$ (details in Example~\ref{ex:ontheflyexample}).}
\label{fig:example}
\end{figure}

Next, we present a simple example of Algorithm~\ref{alg:genericBisimPseudoSimm}, 
showing the main features of our method:
(1) the on-the-fly construction of the (partial) coupling, and (2) the restriction only to those variables 
which are demanded for the solution of the system of linear equations.
\begin{exa}[On-the-fly computation] \label{ex:ontheflyexample}
Consider the CTMC in Figure~\ref{fig:example}, and assume we want to compute the $\lambda$-discounted bisimilarity distance between states $s_1$ and $s_4$, for $\lambda = \frac{1}{2}$.

Algorithm~\ref{alg:genericBisimPseudoSimm} starts by guessing an initial coupling structure $\mathcal{C}_0$. This is done by considering only the pairs of states which are really needed in the computation. Starting from the pair $(s_1,s_4)$ a coupling in $\omega_{1,4} \in \coupling{\tau(s_1)}{\tau(s_4)}$ is guessed as in Figure~\ref{fig:example} and assigned to $\C_0(s_1,s_4)$. 
This demands for the exploration of the pairs  $(s_2,s_3)$, $(s_2,s_4)$, $(s_1,s_2)$ and the guess of three new couplings $\omega_{2,3} \in \coupling{\tau(s_2)}{\tau(s_3)}$, $\omega_{2,4} \in \coupling{\tau(s_2)}{\tau(s_4)}$, and $\omega_{1,2} \in \coupling{\tau(s_1)}{\tau(s_2)}$, to be associated in $\C_0$ with their corresponding pairs. Since no other pairs are demanded, the construction of $\C_0$ terminates as shown in Figure~\ref{fig:example}. The $\lambda$-discrepancy associated with $\C_0$ for the pair $(s_1, s_4)$ is obtained as the solution of the following reduced linear program  
\begin{align*}
\operatorname*{arg\,min}_{d} & \: ( d_{1,4} + d_{2,3} + d_{2,4} + d_{1,2} ) \\
&d_{1,4} \geq \frac{1}{6} \\[-2ex]
&d_{1,4} \geq \frac{\alpha}{2} + \frac{(1-\alpha)}{2} \cdot 
	\Big( \frac{4}{9} \cdot d_{2,3} + \frac{17}{63} \cdot d_{2,4} + \frac{11}{63} \cdot \overbrace{d_{4,4}}^{{}=0} \Big) \\
&d_{2,3} \geq \frac{1}{2} \\[-2ex]
&d_{2,3} \geq \frac{\alpha}{2} + \frac{(1-\alpha)}{2} \cdot 
	 \Big( \frac{1}{4} \cdot d_{1,2} + \frac{1}{4} \cdot \overbrace{d_{2,2}}^{{}=0} + \frac{1}{2} \cdot d_{2,4} \Big) \\
&d_{2,4} \geq \frac{2}{3} \\
&d_{2,4} \geq \frac{1}{2} \cdot 
	\Big( \frac{1}{9} \cdot d_{1,2} + \frac{5}{36} \cdot d_{1,4} + \frac{4}{9} \cdot d_{2,3} + \frac{11}{36} \cdot d_{2,4} \Big) \\
&d_{1,2} \geq \frac{1}{2} \\[-2ex]
&d_{1,2} \geq \frac{\alpha}{2} + \frac{(1-\alpha)}{2} \cdot 
	\Big( \frac{5}{7} \cdot \overbrace{d_{2,2}}^{{}=0} + \frac{1}{4} \cdot d_{1,4} + \frac{1}{28} \cdot d_{2,4} \Big)
\end{align*}
where $\alpha = \tv{\Exp[15]}{\Exp[9]} = \frac{\sqrt[6]{3/5}}{25}$ (by Equation~\eqref{eq:tvexp}). Note that, the bisimilarity distance for the pairs $(s_2,s_2)$ and $(s_4,s_4)$ is always $0$, thus $d_{2,2}$ and $d_{4,4}$ are substituted accordingly. The solution of the above linear program is 
$d^{\C_0}(s_1,s_4) = \frac{\alpha}{2} + \frac{5(1-\alpha)}{21}$, $d^{\C_0}(s_2,s_3) = \frac{1}{2}$, $d^{\C_0}(s_2,s_4) = \frac{2}{3}$, and $d^{\C_0}(s_1,s_2) = \frac{1}{2}$.

Since, the $\lambda$-discrepancy for $(s_2,s_3)$, $(s_2,s_4)$, and $(s_1,s_2)$ equals the distance $\lbld{}{}$ between their labels, it coincides with the bisimilarity distance, hence it cannot be further decreased. Consequently, the pairs of states are added to the set $\mathit{Exact}$ and their associated couplings are removed from $\C_0$. Note that, these pairs will no longer be considered in the construction of a coupling structure.

In order to decrease the $\lambda$-discrepancy of $(s_1,s_4)$, Algorithm~\ref{alg:genericBisimPseudoSimm} constructs a new coupling structure $\C_1$. According to our greedy strategy, $\C_1$ is obtained from $\C_0$ updating $\C_0(s_1,s_4)$ (i.e., the only coupling left) by the coupling $\omega'_{1,4} \in \coupling{\tau(s_1)}{\tau(s_4)}$ (shown in Figure~\ref{fig:example}) that is obtained as the solution of a transportation problem with marginals $\tau(s_1)$ and $\tau(s_4)$, where the current $\lambda$-discrepancy is taken as cost function.
The resulting coupling does not demand for the exploration of new pairs in the CTMC, hence the construction of $\C_1$ terminates. The reduced linear program associated with $\C_1$ is given by 
\begin{align*}
\operatorname*{arg\,min}_{d} & \: d_{1,4} \\
&d_{1,4} \geq \frac{1}{6} \\[-2ex]
&d_{1,4} \geq \frac{\alpha}{2} + \frac{(1-\alpha)}{2} \cdot \Big( 
	\frac{1}{9} \cdot \overbrace{d_{2,2}}^{{}=0}  + 
	\frac{4}{9} \cdot \overbrace{d_{2,3}}^{{}=\frac{1}{2}}  + 
	\frac{10}{63} \cdot \overbrace{d_{2,4}}^{{}=\frac{2}{3}}  + 
	\frac{2}{7} \cdot \overbrace{d_{4,4}}^{{}=0}
  \Big)
\end{align*}
whose solution is $d^{\C_1}(s_1,s_4) = \frac{\alpha}{2} + \frac{31(1-\alpha)}{189}$.

Solving again a new transportation problem with the improved current $\lambda$-discrepancy as cost function, we discover that the coupling structure $\C_1$ cannot be further improved, hence we stop the
computation, returning $\dist(s_1,s_4) = d^{\C_1}(s_1,s_4) = \frac{\alpha}{2} + \frac{31(1-\alpha)}{189}$.
\qee
\end{exa}

\begin{rem} \label{rmk:overapproximatedusage}
Algorithm~\ref{alg:genericBisimPseudoSimm} can also be used for computing over-appro\-xi\-mated 
distances. Indeed, assuming over-estimates for some particular distances are already known, they
can be taken as inputs and used in our algorithm simply storing them in the variable $d$ and treated
as ``exact'' values. In this way our method will return the least over-approximation
of the distance agreeing with the given over-estimates.
This modification of the algorithm can be used to further decrease the exploration of the CTMC.
Moreover, it can be employed in combination with approximated algorithms, having
the advantage of an on-the-fly state space exploration.\qee
\end{rem}

\section{Experimental Results} \label{sec:experiments}

In this section, we evaluate the performance of the on-the-fly algorithm on a 
collection of randomly generated CTMCs\footnote{
The tests have been performed on a prototype implementation coded
in $\text{Wolfram Mathematica}^{\circledR}\, 9$ (available at 
\url{http://people.cs.aau.dk/~giovbacci/tools.html})
running on an Intel Core-i7 3.4 GHz processor with 12GB of RAM.}.

\begin{table}[t]
\begin{center}
\footnotesize
\setlength{\tabcolsep}{1ex}
\begin{tabular}{|c|c|c|c|c|c|}
\hline
 \multirow{2}{*}{\# States} & \multicolumn{2}{c|}{On-the-Fly (exact)} & 
 \multicolumn{2}{c|}{Iterating (approximated)} & Approx.\ \\[0.3ex] \cline{2-5}
  & Time (s) & \# TPs & \# Iterations & \# TPs &  Error \\[0.3ex] \hline\hline 
 10 & 0.352 & 10.500 & 2.660 & 266.667 & 0.0339\\
 12 & 0.772 & 19.700 & 2.850 & 410.403 & 0.0388 \\
 14 & 2.496 & 35.800 & 3.880 & 760.480 & 0.0318 \\
 16 & 4.549 & 50.607 & 5.142 & 1316.570 & 0.0230 \\
 18 & 13.709 & 78.611 & 6.638 & 2151.021 & 0.0206 \\
 20 & 22.044 & 109.146 & 7.243 & 2897.560 & 0.0149 \\
 22 & 50.258 & 140.727 & 7.409 & 3586.010 & 0.0145 \\
 24 & 67.049 & 175.481 & 7.826 & 4508.310 & 0.0141 \\
 26 & 112.924 & 219.255 & 9.509 & 6428.150 & 0.0025 \\
 28 & 247.583 & 295.533 & 11.133 & 8728.530 & 0.0004 \\
 30 & 284.252 & 307.698 & 10.679 & 9611.320 & 0.0006 \\
 40 & 296.633 & 330.824 & 11.294 & 18070.600 & 0.0004 \\
 50 & 807.522 & 368.500 & 16.900 & 42250.000 &  0.00001 \\ \hline
\end{tabular}
\end{center}
\caption{Comparison between the on-the-fly algorithm and the iterative method.}
\label{tab:allpairs}
\end{table}

First, we compare the execution times of the on-the-fly algorithm with those of the iterative method
proposed in Section~\ref{sec:iterativemethod}. 
Since the iterative method only allows for the computation of the distance for all state pairs at once,
the comparison is (in fairness) made with respect to runs of our on-the-fly algorithm with input 
query being the set of all state pairs. For each input instance, the comparison involves the following steps:
\begin{enumerate}[label=\({\alph*}]
  \item we run the on-the-fly algorithm, storing both execution time and the 
  number of solved transportation problems,
  \item then, on the same instance, we execute the iterative method until the running 
  time exceeds that of step 1. We report the number of iterations and
  the number of solved transportation problems.
\item Finally, we calculate the approximation error between the exact solution $\dist$ computed 
  by our method at step 1 and the approximate result $d$ obtained in step 2 by the iterative method, 
  as $\norm{\dist}{d}$.
\end{enumerate}
This has been made on a collection of CTMCs varying from $10$ to $50$ states. 
For each $n = 10, \dots, 30$, we have considered $40$ randomly generated CTMCs 
per out-degree, varying from $3$ to $n$; whereas for $n = 40$ and $50$, the out-degree varies from $3$ to $10$.
Table~\ref{tab:allpairs} reports the average results of the comparison obtained for a discount factor $\lambda = \frac{1}{2}$.

As it can be seen, our use of a greedy strategy in the construction of the couplings leads to a significant improvement in the performances. We are able to compute the exact solution before the iterative method can under-approximate it with an absolute error of $\approx 0.03$, which is a non-negligible error for a value within the interval $[0,1]$. 

\begin{table}[t]
\begin{center}
\footnotesize
\setlength{\tabcolsep}{1.5ex}
\begin{tabular}[c]{|c|c|c|c|c|}
\hline
 \multirow{2}{*}{\# States} & \multicolumn{2}{c|}{out-deg = 3} &  
 \multicolumn{2}{c|}{ $3 \leq \text{out-deg} \leq \text{\# States} / 2$ } \\ \cline{2-5}
 & Time (s) & \# TPs & Time (s) & \# TPs \\ \hline\hline 
 30 & 0.304 & 0.383 & 18.113 & 21.379 \\
 40 & 2.045 & 0.954 & 34.582 & 22.877 \\
 50 & 7.832 & 16.304 & 50.258 & 139.427 \\ \hline
\end{tabular}
\qquad
\begin{tabular}[c]{|c|c|c|}
\hline
 \multirow{2}{*}{\# States} & \multicolumn{2}{c|}{out-deg = 3} \\ \cline{2-3}
  & Time (s) & \# TPs  \\ \hline\hline 
 60 & 34.858 & 12.053  \\
 70 & 48.016 & 14.166  \\
 80 & 73.419 & 29.383  \\
 90 & 75.591 & 13.116  \\
 100 & 158.027 & 20.301 \\ \hline
\end{tabular}
\end{center}

\caption{Average performances of the on-the-fly algorithm on single-pair queries. 
Execution times and number of performed TPs are reported for CTMCs with different out-degree. 
For instances with more than 50 states the out-degree is fixed to $3$;}
\label{tab:singlepairs}
\end{table}

So far, we only examined the case when the on-the-fly algorithm is run on all state pairs at once.
Now, we show how the performance of our method is improved even further when the distance
is computed only for single pairs of states.
Table~\ref{tab:singlepairs} shows the average execution times and number of solved 
transportation problems for (nontrivial) single-pair queries for randomly generated 
of CTMCs with number of states varying from 30 to 100. In the first two columns we consider
CTMCs with out-degree equal to 3, while the last two columns show the average values 
for out-degrees varying from 3 to haft of the number of states of the CTMCs. 
The results show that, when the out-degree of the CTMCs is low, our algorithm performs 
orders of magnitude better than in the general case. 

Notably, our on-the-fly method scales well when the out-degree is small and successive computation of the current $\lambda$-discrepancy are performed on a relatively small set of pairs.

As for the linear program characterization of the bisimilarity distance illustrated in Section~\ref{sec:LPcharact}, tests performed on small CTMCs show that solving $D_\lambda(\M)$ is inefficient in practice, both using the simplex and the interior-point methods\footnote{The implementation is done in $\text{Wolfram Mathematica}^{\circledR}\, 9$ and uses the Linear Program solvers available in the standard library.}. Even for CTMCs with less than $20$ states, the computation times are in the order of hours. For this reason, the efficiency of our on-the-fly technique is by no mean comparable to the linear program solution.

\section{Conclusions and Future Work} \label{sec:conclusions}
In this paper, we proposed a bisimilarity pseudometric for measuring the behavioral similarity between CTMCs, that extends that on MCs introduced by Desharnais et al. in \cite{DesharnaisGJP04}. Moreover, we gave a novel linear program characterization of the distance that, differently from similar previous proposals, have a number of constraints which is polynomial in the size of the CTMC. This proved that the bisimilarity pseudometric can be computed in polynomial time. 
Finally, we defined an on-the-fly algorithm for computing the bisimilarity distance. We demonstrated that, using on-the-fly techniques the computation time is improved with orders of magnitude with respect to the corresponding iterative and linear program approaches. Moreover, our technique allows for the computation on a set of target distances that might be done by only investigating a significantly reduced set of states, and for further improvement of speed.

Our algorithm can be practically used to address a large spectrum of problems. For instance, it can be seen as a method to decide whether two states of a given CTMC are probabilistic bisimilar, to identify bisimilarity classes, or to solve lumpability problems. 
It is sufficiently robust to be used with approximation techniques as, for instance, to provide a least over-approximation of the behavioral distance given over-estimates of some particular distances. It can be integrated with other approximate algorithms, having the advantage of the efficient on-the-fly state space exploration.

Having a practically efficient tool to compute bisimilarity distances opens the perspective of new applications already announced in previous research papers. One of these is the state space reduction problem for CTMCs. Our technique can be used in this context as an indicator for the sets of neighbour states that can be collapsed due to their similarity; it also provides a tool to estimate the difference between the initial CTMC and the reduced one, hence a tool for the approximation theory of CTMCs.

\section*{Acknowledgement}
We would like to thank the anonymous reviewers that with their suggestions greatly improved the presentation of the paper. In particular, we thank Franck van Breugel for the helpful discussions and for providing us the omitted proofs in the proceeding publication of~\cite{ChenBW12}.


\bibliographystyle{alpha}
\bibliography{biblio}

\end{document}